%% file: sel.tex
\documentclass[11pt]{article}
\usepackage[margin=1in]{geometry} 
\usepackage{graphicx}
\usepackage{xcolor}
\usepackage{float}
\usepackage{hyperref}
\usepackage{natbib}
\usepackage{notation}
\usepackage{setspace}

\usepackage[most]{tcolorbox}
\usepackage{mathabx}
\newtcolorbox[auto counter]{mybox}[2][]{float,title={\textcolor{black}{Box~\thetcbcounter: #2}},#1, 
colframe=white!70!gray}
\usepackage{setspace}

\usepackage{cleveref}
\crefname{figure}{Figure}{Figures}
\crefname{equation}{}{}
\crefname{definition}{Definition}{Definitions}
\crefname{corollary}{Corollary}{Corollaries}
\crefname{proposition}{Proposition}{Propositions}
\crefname{theorem}{Theorem}{Theorems}
\crefname{example}{Example}{Examples}
\crefname{remark}{Remark}{Remarks}
\crefname{principle}{Principle}{Principles}
\crefname{lemma}{Lemma}{Lemmata}
\crefname{claim}{Claim}{Claims}
\crefname{table}{Table}{Tables}
\crefname{section}{Section}{Sections}
\crefname{subsection}{Subsection}{Subsections}
\crefname{subsubsection}{Subsection}{Subsection}
\crefname{assumption}{Assumption}{Assumptions}
\crefname{appendix}{Appendix}{Appendices}

\newcounter{mytcbcounter}
\newtcolorbox[use counter=mytcbcounter]{examplebox}[2][]{
    breakable,
    title={\textbf{\textcolor{black}{Example~\themytcbcounter: #2}}},
    #1, 
    colframe=blue!25!white,
    colback=blue!0!white
    }

\title{Inference conditional on selection: a review}

\author{
Anna Neufeld$^{1}$, Ronan Perry$^{2}$, Daniela Witten$^{2,3}$\\
\\
$^{1}$Department of Statistics, Williams College\\
$^{2}$Department of Statistics, University of Washington\\
$^{3}$Department of Biostatistics, University of Washington\\
}
\date{}

\begin{document}
\setstretch{1.6}

\maketitle

\input{sections/abstract}
\input{sections/sec1_intro}

\input{sections/sec2_new_coverage}

\input{sections/sec3_methods}

\input{sections/sec4_methods_2}

\input{sections/sec5_sim}
\input{sections/sec6_real}
\input{sections/sec7_disc}

\bibliographystyle{plainnat}
\bibliography{sel}

\appendix
\input{sections/zAppendix}

\end{document}

%% file: sections/abstract.tex
\begin{abstract}
In this article, we review \emph{selective inference}, a set of techniques for \emph{inference when the statistical question asked is a function of the data}. This setting often arises in contemporary scientific workflows,  where hypotheses and parameters may be selected from the data, rather than specified in advance. In this setting, classical inferential techniques do not achieve ``classical" guarantees, such as nominal coverage of confidence intervals.  We consider three examples for which selective inference solutions are required: inference on a ``winner", inference on the mean of a region in a regression tree, and inference on the difference in means between a pair of clusters.  We argue that \emph{conditional} guarantees are of scientific interest in such settings. We then review and draw connections between several approaches that provide such guarantees. Finally, we illustrate these approaches in simulation and through an application to single-cell RNA sequencing data. 
\end{abstract}

%% file: sections/sec1_intro.tex
\section{Introduction}
\label{sec1}

In classical statistics, models, hypotheses, and parameters are specified before the data are observed. By contrast, modern scientific practice often involves exploring the data before selecting models, hypotheses, or parameters that appear promising.  When traditional statistical methods such as t-tests or Wald intervals are applied without accounting for this data-driven selection, they fail to provide guarantees such as Type 1 error control or nominal coverage. This issue has been partially blamed for the so-called replication crisis in science 
\citep{benjamini2020selective}.

The practice of naively using the same data both to select a statistical question, and to answer it, is sometimes referred to as \emph{double dipping}. Concerns about double dipping date back at least to the 1940s, when \cite{koopmans1949koopmans} noted, ``we constantly take hints from the data regarding the choice of hypotheses to be tested from the same data--although we know that the degree of confidence to be placed in the test is affected by that practice." Decades later, \cite{breiman1992little} called this practice ``a quiet scandal in the statistical community."  Despite such warnings, double dipping remains common in modern scientific practice, as recently documented in the fields of neuroscience \citep{kriegeskorte2009circular}, genomics \citep{lahnemann2020eleven}, and ecology \citep{campbell2021consequences}. It may be that the statistical community has only itself to blame for the pervasive nature of double dipping: as noted in Chapter 1 of \cite{kuchibhotla2020unified}, introductory statistics textbooks routinely teach classical p-values for linear model coefficients while simultaneously encouraging iterative model selection (i.e., variable selection via stepwise regression), without emphasizing that such p-values are invalid after this iterative process.

Of course, one could avoid double dipping by simply restricting attention to pre-specified hypotheses. But as  \cite{tukey1969analyzing} remarked, ``to concentrate on confirmation, to the exclusion or submergence of exploration, is an obvious mistake. Where does new knowledge come from? How can an undetected criminal be put on trial?" The field of \emph{selective inference} aims to provide tools for valid inference on parameters that were selected based on data exploration. In this paper, we review  recent advances in the field. 

Much of the foundational work in selective inference relates to the problem of inference after variable selection in linear regression \citep{leeb2003finite, berk2013valid, lee2016exact, tibshirani2016exact, tian_selective_2018}. Recently, selective inference solutions have been developed for a much broader set of problems, such as inference after changepoint detection \citep{hyun2021post, jewell2022testing}, inference after outlier or anomaly detection \citep{chen2020valid, niihori2025statistically}, and inference after detecting ``regions" of interest  \citep{benjamini2019selection, nishino2025statistical}.  To illustrate the breadth of settings in which a need for selective inference arises, we frame this article around the following three motivating examples.

%\begin{example}[label=ex_winner]{Inference on a winner}
\begin{examplebox}[label=ex_winner]{Inference on a winner}
Suppose we observe realizations $Y_1,\ldots,Y_n$ corresponding to $n$ different \emph{candidates}, with $Y_k \overset{\mathrm{ind.}}{\sim} \N(\mu_k, \sigma^2)$. The parameter of interest is $\mu_{\hat{k}}$, where $\hat{k} = \argmax_k Y_k$. As  we are most likely to select an index $k$ precisely when $Y_k$ is unusually large, it follows that $Y_{\hat{k}}$ is biased for $\mu_{\hat{k}}$. Therefore, the classical $1-\alpha$ confidence interval for $\mu_k$ corresponding to some fixed index $k$, which takes the form
\begin{equation}
\label{eq:ex1-classical}
\sbr{Y_k \pm z_{1-\alpha/2} \sigma},
 \end{equation} 
does not attain $1-\alpha$ coverage when $k$ is replaced with $\hat{k}$. 
This issue is often referred to as the \emph{winner's curse}, and has been considered by a number of authors \citep{zhong2010correcting, fuentes2018confidence, andrews2024inference, zrnic2024locally, zrnic2025flexible}.
\end{examplebox}
%\end{example}

%\begin{marginnote}[]
%\entry{$z_{1-\alpha/2}$}
%\ Denotes the $1-\alpha/2$ quantile of the standard normal %distribution
%\end{marginnote}

%\begin{example}[label=ex_tree]{Inference on a regression tree}
\begin{examplebox}[label=ex_tree]{Inference on a regression tree}
Regression tree algorithms such as CART \citep{breiman1984classification}  partition  observations $\cbr{(X_i, Y_i)}_{i=1}^n$ into regions in covariate space with similar $Y_i$ values. We consider inference on $\mu_{\hat{R}} = \frac{1}{|i: X_i \in {\hat{R}}|} \sum_{i: X_i \in {\hat{R}}} \EE[Y_i]$, where ${\hat{R}}$ is one such region. Assuming that  $Y_i \overset{\mathrm{ind.}}{\sim}  \N(\mu_i, \sigma^2)$, the classical $1-\alpha$ confidence interval for $\mu_R$ corresponding to some fixed region $R$ takes the form
\begin{equation}
\label{eq:ex2-classical}
\sbr{\rbr{\frac{1}{|i: X_i \in R|} \sum_{i: X_i \in R} Y_i} \pm z_{1-\alpha/2} \sigma / \sqrt{|i: X_i \in R|}}.
 \end{equation} 
 This interval does not attain $1-\alpha$ coverage when $R$ is replaced with $\hat R$. This issue has been considered by \cite{athey2016recursive, wager2015adaptive, neufeld2022tree} and \cite{bakshi2024inference}. 
 %\end{example}
\end{examplebox}

%\begin{example}[label=ex_cluster]{Inference after clustering}
\begin{examplebox}[label=ex_cluster]{Inference after clustering}
The analysis of single-cell RNA sequencing data often centers around identifying genes that are differentially expressed across cell types. Since the cell types are not known \emph{a priori}, scientists (i) estimate the cell types by clustering the rows of $X$, a matrix of expression counts for $n$ cells and $p$ genes; and (ii) test the expression counts of the $j$th gene, $X_j$, for association with the estimated cluster labels. Assuming that $X_{ij} \overset{\mathrm{ind.}}{\sim} \N(\mu_{ij}, \sigma^2)$, 
the classical $1-\alpha$ confidence interval for a difference in means of feature $j$ across two pre-specified groups $A$ and $B$, denoted $\frac{1}{|i \in A|} \sum_{i \in A} \EE[X_{ij}] - \frac{1}{|i \in B|} \sum_{i \in B} \EE[X_{ij}]$, takes the form
\begin{equation}
\label{eq:ex3-classical}
%\CI_i^{1-\alpha}(Y)  := 
\sbr{\rbr{\frac{1}{|i \in A|} \sum_{i \in A} X_{ij} - \frac{1}{|i \in B|} \sum_{i \in B} X_{ij}} \pm z_{1-\alpha/2} \sigma\sqrt{1/|i \in A| + 1/|i \in B|}}.
\end{equation}
This interval does not attain $1-\alpha$ coverage when the pre-specified groups $A$ and $B$ are replaced with clusters estimated from the data. This issue has been described as one of the ``grand challenges" in single cell data science  \citep{lahnemann2020eleven}.
\end{examplebox}
%\end{example}

 In what follows, we highlight key considerations that arise in selective inference, as well as pros and cons of various selective inference frameworks, through the lens of these three examples. The rest of this article is organized as follows. In Section~\ref{sec_coverage}, we discuss conditional versus unconditional inferential guarantees, and argue using a simulation related to Example~\ref{ex_winner} that conditional guarantees are more  scientifically meaningful. We therefore focus the remainder of the review on methods that achieve conditional guarantees. In Sections~\ref{sec:methods} and \ref{sec:methods_2}, we review several distinct approaches to achieve conditional guarantees, and demonstrate that they can all be cast as special cases of a ``unifying recipe". A simulation study related to Example~\ref{ex_tree} is in Section~\ref{sec_simulation}, and  an application to single-cell RNA sequencing data related to Example~\ref{ex_cluster} is in 
 Section~\ref{sec_realdata}. We conclude with a discussion of challenges and future directions in Section~\ref{sec_discuss}. The Appendix contains details related to empirical results.

%% file: sections/sec2_new_coverage.tex
\section{Unconditional and conditional coverage}
\label{sec_coverage}

Suppose that we wish to construct a confidence interval around a parameter selected from the data, as in Examples~\ref{ex_winner}-\ref{ex_cluster}. \emph{What inferential guarantee should we demand of this interval?} 

To formalize this question, suppose that our data is distributed as $Y \sim F$. Let $\Kcal$ index a countable number of possible parameters $\theta_k(F)$ of interest to the analyst, and $\Theta := \cbr{\theta_k: k \in \Kcal}$, dropping dependence on $F$ for brevity.  We let $\Sel : \mathcal{Y} \rightarrow \Kcal$ denote a \emph{selection rule} that is applied to our data $Y \in \mathcal{Y}$ to obtain a data-driven parameter of interest $\theta_{\Sel(Y)}$. For instance, in Example~\ref{ex_winner}, $\Theta = \{\mu_1, \ldots, \mu_n\}$ and $\theta_{\Sel(Y)} = \mu_{\hat{k}}$, while in Example~\ref{ex_tree}, $\Theta = \{\mu_{R} :$ the region $ R \subset \mathbb{R}^p$  could result from  CART$\}$ and $\theta_{\Sel(Y)}=\mu_{\hat{R}}$. 

Let $\selCI$ be a $1-\alpha$ confidence interval based on our data $Y$ for the selected parameter $\theta_{\Sel(Y)}$. Perhaps surprisingly, there is not a consensus in the selective inference literature about what type of coverage this interval should achieve: proposals target either \emph{unconditional} or \emph{conditional} coverage. We will define these types of coverage in Sections~\ref{subsec:uncond} and \ref{subsec:cond}, and will then argue in Section~\ref{subsec:goal} in favor of a conditional coverage guarantee.  

\begin{remark}
The assumption of a countable number of possible parameters is made here for ease of exposition. As noted by \cite{fithian_optimal_2017}, these ideas generalize to an uncountable number of parameters with additional care. 
\end{remark}

\subsection{Unconditional coverage} \label{subsec:uncond}

Table~\ref{tab:classical-coverages} shows the proportion of simulated datasets for which the classical 90\% confidence intervals defined in 
Examples~\ref{ex_winner}-\ref{ex_cluster} 
contain the relevant selected parameter. Simulation details are provided in Appendix~\ref{appendix_naive}. This quantity is referred to as \emph{unconditional coverage}. 

\begin{definition}
\label{def:unconditional_coverage}
The interval $\selCI$ achieves the nominal \emph{unconditional coverage} if $\Pr\left( \theta_{{\Sel}(Y)} \in  \selCI \right) \geq 1-\alpha$.
\end{definition}

As shown in Table~\ref{tab:classical-coverages}, the unconditional coverage of the classical intervals falls well below 90\% when the signal is not strong.

\begin{table}[h]
\caption{In the context of Examples~\ref{ex_winner}-\ref{ex_cluster}, we  report the unconditional coverage of the classical 90\% confidence interval across 2000 simulated datasets for each of three levels of signal strength: none (corresponding to no signal), medium signal, and strong signal.  The classical interval does not account for the data-driven nature of the parameter, and thus severely under-covers the parameter when the signal strength is weak. Details of the simulation are provided in Appendix~\ref{appendix_naive}.} 
\centering
\small
\begin{tabular}{c|| c|c|c }
 & None & Medium & Strong  \\
\hline
\hline
Winner's Curse 
& 0.0085 & 0.779 & 0.888\\

Regression Tree 
& 0.358 & 0.887 & 0.910 \\
Clustering
& 0.370 & 0.858& 0.906\\
\hline
\hline
\end{tabular}
\label{tab:classical-coverages}
\end{table}

\begin{remark}
\label{remark:signal}
    In Table~\ref{tab:classical-coverages}, when the signal  is strong, the classical intervals have unconditional coverage very close to the nominal 90\%. This is because in the strong-signal setting, the selection event  captures the ``true" signal with  probability nearly one. For instance, in the context of the winner's curse in Example~\ref{ex_winner}, when the signal is strong, then $\Pr\left(\hat{k} = \argmax_k \mu_k\right) \approx 1$. In this case, the selection event is (nearly) deterministic, and no correction for selection is required. This is a common phenomenon in selective inference: failure to account for data-driven selection is most severe when the signal-to-noise ratio underlying the selection event is low. 
\end{remark}

In Definition~\ref{def:unconditional_coverage}, the parameter that we wish to cover is random. The two most common strategies to achieve unconditional coverage each involve a different strategy to circumvent the randomness in this parameter.
\begin{enumerate}
\item \emph{Seek simultaneous coverage of $\{\theta_k : k \in \mathcal{K}\}$, which is a deterministic set}. Such simultaneous coverage can be achieved by borrowing techniques from the multiple testing literature \citep{berk2013valid, bachoc2019valid, kuchibhotla2020valid}. That is, we first construct a set of intervals $\{\CI_{k}^{1-\alpha}(Y) : k \in \Kcal \}$ that satisfy $\Pr\left(  \cap_{k \in \Kcal} \left\{ \theta_{k} \in \CI_{k}^{1-\alpha}(Y)  \right\} \right) \geq 1-\alpha$. Of course, for all $k\in\Kcal$, $\Pr\left(    \theta_{k} \in \CI_{k}^{1-\alpha}(Y)  \right) \geq \Pr\left(  \cap_{k \in \Kcal} \left\{ \theta_{k} \in \CI_{k}^{1-\alpha}(Y)  \right\} \right)$. Therefore, for all $k\in\Kcal$, $\Pr\left(    \theta_{k} \in \CI_{k}^{1-\alpha}(Y)  \right) \geq 1-\alpha$. Thus, no matter the selection procedure $\Sel(Y)$, unconditional coverage follows directly \citep{berk2013valid}.
    \item \emph{Condition on the selection event, so that the selection event is no longer random.}
     This technique provides not only unconditional coverage but also a stronger guarantee, conditional coverage, which is the topic of the next subsection. 
\end{enumerate}
While some recent approaches for achieving unconditional guarantees do not precisely follow either of these strategies \citep{zrnic2024locally,andrews2024inference,zrnic2025flexible,zrnic2023post},  we do not provide details here, since --- as noted in the next section --- conditional guarantees will be the topic of the remainder of this article.

\begin{remark}\label{remark:uncond_pvalue}
What if we want a p-value for $H_0: \theta_{\Sel(Y)}=0$, rather than a confidence interval? It is not clear how to achieve this from the unconditional coverage guarantee that $\Pr\left( \theta_{{\Sel}(Y)} \in  \selCI \right) \geq 1-\alpha$, since that guarantee averages over different selections of $\Sel(Y)$.  Therefore, if p-values are the goal, then conditional coverage --- which we discuss next --- may be a more suitable alternative. 
\end{remark}

\subsection{Conditional coverage} \label{subsec:cond}

\begin{definition}
\label{def:conditional_coverage}
The interval $\selCI$ achieves the nominal \emph{conditional coverage}, also referred to in \cite{fithian_optimal_2017} as \emph{selective coverage}, if for any $k \in \Kcal$,
$
\Pr\left( \theta_{{\Sel}(Y)} \in \selCI \mid \Sel(Y) = k \right) \geq 1-\alpha.
$
\end{definition}

\begin{proposition}
 Conditional coverage  implies unconditional coverage. \label{prop:coverages}
\end{proposition}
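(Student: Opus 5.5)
The plan is to apply the law of total probability, partitioning the sample space according to the value of the selection rule. Since $\Kcal$ is countable and $\Sel$ maps $\mathcal{Y}$ into $\Kcal$, the events $\{\Sel(Y)=k\}$, $k \in \Kcal$, are disjoint and their union is the whole sample space. Countable additivity then gives
\begin{equation*}
\Pr\left( \theta_{\Sel(Y)} \in \selCI \right) = \sum_{k \in \Kcal} \Pr\left( \theta_{\Sel(Y)} \in \selCI ,\ \Sel(Y)=k \right).
\end{equation*}

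Next, I would restrict the sum to the set $\Kcal^+ := \{k \in \Kcal : \Pr(\Sel(Y)=k) > 0\}$. Every $k \notin \Kcal^+$ contributes zero to the sum, since its summand is bounded by $\Pr(\Sel(Y)=k)=0$. For $k \in \Kcal^+$ the conditional probability is well defined, so each summand can be written as
\begin{equation*}
\Pr\left( \theta_{\Sel(Y)} \in \selCI \mid \Sel(Y)=k \right)\Pr(\Sel(Y)=k).
\end{equation*}
Definition~\ref{def:conditional_coverage} bounds the conditional factor below by $1-\alpha$. Summing over $k \in \Kcal^+$ then yields a lower bound of $(1-\alpha)\sum_{k\in\Kcal^+}\Pr(\Sel(Y)=k) = 1-\alpha$. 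This is exactly the unconditional coverage of Definition~\ref{def:unconditional_coverage}.

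The only real obstacle is the handling of null selection events. Definition~\ref{def:conditional_coverage} quantifies over all $k \in \Kcal$, but conditioning on an event of probability zero is undefined. The fix is the restriction to $\Kcal^+$ above: the definition should be read as applying only to $k \in \Kcal^+$, and the remaining indices are harmless because their summands vanish. Countability of $\Kcal$, assumed earlier in the section, is what justifies both the countable additivity step and the fact that the probabilities $\Pr(\Sel(Y)=k)$ over $\Kcal^+$ sum to one.
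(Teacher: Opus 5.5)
Your proof is correct and follows the same route as the paper: partition by the countably many possible selections via the law of total probability, bound each conditional coverage below by $1-\alpha$, and use that the selection probabilities sum to one. Restricting to indices $k$ whose selection event has positive probability is a small refinement the paper leaves implicit. It correctly handles the fact that conditioning on a null event is undefined.
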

\begin{proof}
Assume that $\selCI$ attains the nominal conditional coverage (Definition~\ref{def:conditional_coverage}). Then,
by the law of total probability, for a countable set of possible parameters $\mathcal{K}$,
\begin{align*}
\Pr\left( \theta_{{\Sel}(Y)} \in \selCI \right) &= \sum_{k \in \mathcal{K}} \Pr\left( \theta_{{\Sel}(Y)} \in \selCI \mid \Sel(Y) = k \right) \Pr\left(\Sel(Y) = k\right) \\
&\geq \sum_{k \in \mathcal{K}} (1-\alpha)\Pr\left(\Sel(Y) = k\right)  = (1-\alpha) \sum_{k \in \mathcal{K}} \Pr\left(\Sel(Y) = k\right) = (1-\alpha). 
\end{align*}
\end{proof}

It is natural to wonder whether techniques from multiple testing can be used to achieve conditional coverage. To answer this question, we revisit Example~\ref{ex_winner}.
%\begin{example}[label=ex:bonferroni]{Example 1, continued}
%\begin{examplebox}[label=ex:bonferroni]{Example 1, continued}

Suppose that we observe $Y_k \overset{\mathrm{ind.}}{\sim} \N(\mu_k, \sigma^2)$ for $k = 1,\ldots,n$, where $\sigma^2$ is known. 
Our interest lies in  $\mu_{\hat{k}}$, where $\hat{k} = \argmax_k Y_k$. 
Suppose we apply a  Bonferroni correction \citep{dunn1961multiple} over the $n$ candidates: that is, we define the interval $\CI_k^{1-\alpha}(Y) := \sbr{Y_{{k}} \pm z_{1-\frac{\alpha}{2n}} \sigma}$.
Following standard arguments, 
$\Pr\left( \cap_{i=k}^n \left\{ \mu_{k} \in \CI_{k}^{1-\alpha}(Y) \right\} \right) \geq 1- \alpha$. As pointed out in Section~\ref{subsec:uncond}, this implies that $\CI_{\hat k}^{1-\alpha}(Y)$, obtained by setting $k = \hat{k}$, attains nominal unconditional coverage in the sense of Definition~\ref{def:unconditional_coverage}. 

However, the Bonferroni-corrected intervals \emph{do not achieve conditional coverage}, in the sense of Definition~\ref{def:conditional_coverage}. To see this, we simulate data where  $\mu_1=4$, $\mu_2=\ldots=\mu_n=0$,  and $\sigma=1$. Thus, the first index is the ``true winner".  Table~\ref{tab:winner-bonferroni} displays Monte Carlo estimates (over 2,000 simulated datasets) of three quantities: the unconditional coverage, the coverage when the true winner is selected, and the coverage when the true winner is \emph{not} selected. First, we see that unconditional coverage holds as expected.
Furthermore, when the true winner is selected, the interval covers the selected parameter with very high probability. However, when the true winner is not selected, the probability of coverage is too low. Thus, \emph{conditional coverage does not hold.} 

What has happened? Recall that the first index is the ``true winner", and consider a fixed $k > 1$. We will have $\Sel(Y) =k$ precisely  in the subset of datasets for which $Y_k$ happens to take on a large value by chance; among such datasets, the Bonferroni interval is less likely to cover $\mu_k$. In other words, among datasets for which the wrong index is selected, the Bonferroni interval for the selected index tends to under-cover.
%\end{example}
  
\begin{table}[!htb]
\caption{%In Example~\ref{ex:bonferroni}, 
We assess the unconditional and conditional coverage of the Bonferroni-corrected 90\% intervals in a setting where the ``true winner" is the first index. While the Bonferroni intervals over-cover the parameter of interest unconditionally,  this is driven by over-coverage in the case where the ``correct" index is selected (i.e. $\Sel(Y)=1$). Among datasets where the correct index is not selected, the Bonferroni intervals do not attain nominal coverage, demonstrating that conditional coverage is not attained. Classical {90\%}  intervals are displayed for completeness. 
}
\centering
\small
\begin{tabular}{|c|c|| c|c| }
\hline
Type of Guarantee & Quantity & Bonferroni & Classical  \\
\hline
Unconditional & $
\Pr\left( \mu_{{\Sel}(Y)} \in \selCI \right)$
& 0.983 & 0.852 
\\
Conditional &  $
\Pr\left( \mu_{{\Sel}(Y)} \in \selCI \mid {\Sel(Y)} = 1 \right)$
& 1.00 & 0.936\\
Conditional &  $
\frac{1}{n-1} \sum_{k=2}^n \Pr\left( \mu_{{\Sel}(Y)} \in \selCI \mid {\Sel(Y)} = k \right)$
&0.811 & 0\\
\hline
\end{tabular}
\label{tab:winner-bonferroni}
\end{table}

Table~\ref{tab:winner-bonferroni} shows us that %Example~\ref{ex:bonferroni} has shown us that 
while simultaneous inference techniques such as the Bonferroni correction provide very strong guarantees --- much stronger than required for unconditional coverage! --- they do \emph{not} provide conditional coverage. 

\begin{remark}\label{remark:hypothesis}
We saw in Remark~\ref{remark:uncond_pvalue} that unconditional guarantees do not extend naturally to hypothesis testing. By contrast, constructing a hypothesis test from an interval that attains conditional coverage is straightforward: if the interval $CI_{\Sel(Y)}^{1-\alpha}(Y)$ attains $1-\alpha$ conditional coverage, then $\Pr_{H_0: \theta_{\Sel(y)}=0} \rbr{0 \notin \CI_{\Sel(Y)}^{1-\alpha}(Y) \mid \Sel(Y) = \Sel(y)} \leq \alpha$. This property is referred to as selective Type 1 error control by \cite{fithian_optimal_2017}.  
\end{remark}

\subsection{Our goal: conditional coverage} \label{subsec:goal}

\emph{Should we prefer conditional or unconditional coverage in practice?}

In this paper, we take the view of \citet{leeb_can_2006}, \citet{fithian_optimal_2017}, and \citet{kuffner2018principled}, who argue that conditional coverage (or its hypothesis testing analog; see Remark~\ref{remark:hypothesis}) ought to be the goal when carrying out inference on data-driven parameters. This is in contrast to authors who use techniques that attain conditional coverage  simply as a technique to achieve unconditional coverage (due to Proposition~\ref{prop:coverages}). 

To understand our view, consider Example~\ref{ex_winner}. In the motivating application considered by \cite{andrews2024inference}, $Y_k$ represents the estimated treatment effect of program $k$ among $n$ candidate programs in a randomized trial. Program $\hat{k} = \argmax_k Y_k$ is implemented as policy based on its promising performance. We wish to estimate $\mu_{\hat{k}}$, the expected treatment effect of the implemented program.  
Under the Bonferroni approach, we are overconfident in our inference on $\mu_{\hat{k}}$ precisely when we implemented a suboptimal program (i.e. when $\hat{k} \neq 1$; see Table~\ref{tab:winner-bonferroni}). In other words,  when we are \emph{wrong} (we do not select the true population winner), we are wrong \emph{twice} (we fail to cover the expected value of the selected winner).
On the other hand, a conditional approach would provide accurate inference for whichever policy we selected to deploy in practice. 

Therefore, throughout the rest of this paper, we focus on the goal of conditional coverage. 

\begin{remark} We acknowledge that there is room for  a difference in perspectives over whether conditional or unconditional coverage is the appropriate scientific goal. Of course, a reader whose primary interest lies in unconditional coverage can keep Proposition~\ref{prop:coverages} in mind as they read.  Moreover, in some cases the goal of conditional coverage is so stringent as to virtually eliminate all power (see the end of Section~\ref{subsec:full_cond} for a further discussion of this point); in such cases, a more modest goal of unconditional coverage may be more realistic.  
\end{remark}

%% file: sections/sec3_methods.tex
\section{Approaches for achieving conditional coverage}
\label{sec:methods}

In this section, we will see two approaches for obtaining conditional coverage, in the sense of Definition~\ref{def:conditional_coverage}: full conditional selective inference and sample splitting. While at first glance these approaches might appear quite different, we argue throughout this section that they are actually very similar: both involve conditioning on a selection event. In Section~\ref{sec:methods_2} we will show that, in fact, many additional strategies for achieving conditional coverage follow the same ``general recipe''.

\subsection{Full conditional selective inference}
\label{subsec:full_cond}

Consider a  \emph{fixed} parameter $\theta_k$ for some $k \in \Kcal$. Letting $F_{T_k(Y)}(\cdot, \theta_k)$ denote the CDF of a statistic $T_k(Y)$ for parameter $\theta_k$, it follows that 
\begin{equation}\label{fixed:ci}
   \CI^{1-\alpha}(T_k(Y)) := \cbr{\eta : F_{T_k(Y)}\rbr{T_k(Y); \eta} \in [\alpha/2, 1-\alpha/2]}
\end{equation}
is a valid ``classical" confidence interval for $\theta_k$ in the  sense that $\Pr\rbr{\theta_k \in \CI^{1-\alpha}(T_k(Y))} = 1-\alpha$. In fact, the classical confidence intervals in Equations \ref{eq:ex1-classical}-\ref{eq:ex3-classical} of Examples~\ref{ex_winner}--\ref{ex_cluster} are of this form. 

However, our goal is to construct a confidence interval for a data-driven parameter $\theta_{\Sel(Y)}$ with valid coverage conditional on $\cbr{\Sel(Y) = k}$, as defined in~\cref{def:conditional_coverage}. To achieve this goal, we will consider the  conditional CDF $F_{T_{\Sel(Y)}(Y) \mid \cbr{\Sel(Y) = k}}$, i.e., the distribution of the test statistic $T_{\Sel(Y)}(Y)$ conditional on the event $ \cbr{\Sel(Y) = k}$. The idea of characterizing this conditional CDF to account for inference on a data-driven parameter, either approximately or exactly, dates back to at least \citet{olshen_conditional_1973}, and has continued to be a focus of study in the years since then \citep{potscher_effects_1991, potscher_distribution_1998, leeb2003finite, leeb2005model, leeb_distribution_2005, leeb_can_2006, lee2016exact,tibshirani2016exact}. If this conditional CDF can be characterized, then a confidence interval for the selected parameter can be constructed as 
\begin{equation}\label{conditional:ci}
    \CI^{1-\alpha}(T_{\Sel(Y)}(Y) \mid \Sel(Y)) := \cbr{\eta : F_{T_{\Sel(Y)}(Y) \mid \Sel(Y)}\rbr{T_{\Sel(Y)}(Y); \eta} \in [\alpha/2, 1-\alpha/2]}.
\end{equation}
This interval is random in both $\Sel(Y)$ and $T_{\Sel(Y)}(Y)$. It follows from the probability integral transform that, conditional on any selection event $\Sel(Y)=k$,
\begin{equation}
    \Pr\rbr{\theta_{\Sel(Y)} \in \CI^{1-\alpha}(T_{\Sel(Y)}(Y)\mid \Sel(Y)) \mid \Sel(Y) = k }
    = 1-\alpha.
    \label{eq:cond_cov_proof}
\end{equation}

Intuitively, the confidence interval $\CI^{1-\alpha}(T_{\Sel(Y)}(Y)\mid \Sel(Y))$ provides conditional coverage because conditioning on all of the information used for selection means that none of this information is re-used for inference; thus, we do not ``double dip'' in that information.  

The selective inference literature often refers to inference using confidence intervals of the form \cref{conditional:ci} as  ``conditional selective inference". However, we will see in what follows that there exist other  strategies for obtaining  conditional (selective) coverage. Since \cref{conditional:ci}  uses the \emph{full} information in the data to select the parameter, we will refer to it as  a \emph{full conditional selective inference} interval, or ``full CSI" for short. 

In practice, it is often the case that the distribution of $T_{\Sel(Y)}(Y) \mid \cbr{\Sel(Y) = k}$ is intractable for inference, e.g., because it depends on unknown nuisance parameters. In such a setting, it may be fruitful to 
``condition on more''. Specifically, 
suppose that 
$T_{\Sel(Y)}(Y) \mid \cbr{\C(Y) = c}$ is tractable, where the event $\cbr{\C(Y) = c}$ informally ``implies the event'' $\cbr{\Sel(Y) = k}$ in the sense of \cref{prop:monotonicity}. Then it turns out that an interval that conditions on $\cbr{\C(Y)=c}$ will provide a conditional coverage guarantee in terms of $\cbr{\Sel(Y) = k}$.

\begin{proposition}\label{prop:monotonicity}
    Suppose that $\sigma \rbr{\C(Y)} \supseteq \sigma \rbr{\Sel(Y)}$, where $\sigma \rbr{\cdot}$ denotes the sigma-algebra generated by a random variable. Then, for
    \begin{equation}
    \label{eq:selci_more}
   \CI^{1-\alpha}(T_{\Sel(Y)}(Y) \mid \C(Y)) := \cbr{\eta : F_{T_{\Sel(Y)}(Y) \mid \C(Y)}\rbr{T_{\Sel(Y)}(Y); \eta} \in [\alpha/2, 1-\alpha/2]},
    \end{equation}
    we have that, for any $k \in \Kcal$,
    \begin{equation}
    \Pr\rbr{\theta_{\Sel(Y)} \in \CI^{1-\alpha}(T_{\Sel(Y)}(Y)\mid \C(Y)) \mid \Sel(Y) = k } = 1-\alpha.
    \end{equation}
\end{proposition}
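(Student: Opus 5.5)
The plan is to show that the interval in \cref{eq:selci_more} has exact coverage $1-\alpha$ conditional on each value of $\C(Y)$, and then average over the values of $\C(Y)$ that are compatible with $\{\Sel(Y)=k\}$. This is the tower property, used in the same spirit as the proof of \cref{prop:coverages}.

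\textbf{Step 1: $\Sel(Y)$ is a function of $\C(Y)$.} Since $\sigma(\C(Y)) \supseteq \sigma(\Sel(Y))$, the Doob--Dynkin lemma gives a measurable $g$ with $\Sel(Y) = g(\C(Y))$. So on each event $\{\C(Y)=c\}$ the selected index is the fixed value $g(c)$, and the target $\theta_{\Sel(Y)} = \theta_{g(c)}$ is a fixed parameter. Moreover, $\{\Sel(Y)=k\}$ is the union of the events $\{\C(Y)=c\}$ over $c \in g^{-1}(k)$.

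\textbf{Step 2: coverage conditional on $\C(Y)=c$.} Fix such a $c$ and write $k = g(c)$. Conditionally on $\{\C(Y)=c\}$, the statistic $T_{\Sel(Y)}(Y) = T_k(Y)$ has CDF $F_{T_k(Y)\mid \C(Y)=c}(\cdot;\theta_k)$. Assuming this conditional distribution is continuous, as is implicit in \cref{eq:cond_cov_proof}, the probability integral transform makes $F_{T_k(Y)\mid \C(Y)=c}(T_k(Y);\theta_k)$ uniform on $[0,1]$ given $\C(Y)=c$. Now $\theta_k$ belongs to the interval exactly when this quantity lies in $[\alpha/2,1-\alpha/2]$. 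Hence
\[
\Pr\rbr{\theta_{\Sel(Y)} \in \CI^{1-\alpha}(T_{\Sel(Y)}(Y)\mid \C(Y)) \mid \C(Y)=c} = 1-\alpha.
\]
This is the same computation as for \cref{conditional:ci}, with $\Sel(Y)$ replaced by $\C(Y)$.

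\textbf{Step 3: integrate.} Let $A$ denote the coverage event. By Step 1, $\{\Sel(Y)=k\} \in \sigma(\C(Y))$. The tower property then gives
\[
\Pr(A \mid \Sel(Y)=k) = \EE\sbr{\Pr(A\mid \C(Y)) \mid \Sel(Y)=k} = \EE\sbr{1-\alpha \mid \Sel(Y)=k} = 1-\alpha .
\]
When $\C(Y)$ takes countably many values, this is just the law of total probability over $c\in g^{-1}(k)$, weighted by $\Pr(\C(Y)=c)/\Pr(\Sel(Y)=k)$, exactly as in the proof of \cref{prop:coverages}.

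\textbf{Main obstacle.} The only delicate point is rigor when $\C(Y)$ is continuous, for instance when conditioning on a sufficient statistic for nuisance parameters. In that case "$\C(Y)=c$" must be read through a regular conditional distribution, and Step 2 holds only for almost every $c$. The tower-property form of Step 3 handles this directly, since null sets do not affect the conditional expectation. I would state the continuity assumption on the conditional CDF explicitly. Without it, the probability integral transform gives only coverage at least $1-\alpha$, which suffices for \cref{def:conditional_coverage}.
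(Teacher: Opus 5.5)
Your proof is correct and is essentially the paper's argument: the tower property over $\C(Y)$, with $\sigma(\C(Y)) \supseteq \sigma(\Sel(Y))$ making $\{\Sel(Y)=k\}$ a $\C(Y)$-measurable event, and the probability integral transform giving coverage $1-\alpha$ conditionally on $\C(Y)$ (your Doob--Dynkin step and explicit continuity assumption just spell out what the paper leaves implicit). One correction to your closing aside: without continuity, the interval as defined applies the same CDF to both tails and need not be conservative (e.g.\ if $T$ is Bernoulli$(1/2)$ under $\theta$, then $F(T;\theta)\in[\alpha/2,1-\alpha/2]$ only when $T=0$, giving coverage $1/2$), so your ``at least $1-\alpha$'' claim holds only if the upper-tail condition is replaced by $F(T^-;\eta)\le 1-\alpha/2$.
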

\begin{proof}
    We apply the law of total expectation, the fact that $\sigma \rbr{\C(Y)} \supseteq \sigma \rbr{\Sel(Y)}$, and the probability integral transform: 
    \begin{align*}
    &\Pr\rbr{\theta_{\Sel(Y)} \in \CI^{1-\alpha}(T_{\Sel(Y)}(Y)\mid \C(Y)) \mid \Sel(Y) = k }\\
     &= \EE\Big[
     \Pr\Big(
     \theta_{\Sel(Y)} \in \CI^{1-\alpha}(T_{\Sel(Y)}(Y)\mid \C(Y)) \mid \C(Y), \Sel(Y)\Big) \bigm| \Sel(Y) = k 
     \Big]\\
    &= \EE\Big[
    \Pr\Big(
    \theta_{\Sel(Y)} \in \CI^{1-\alpha}(T_{\Sel(Y)}(Y)\mid \C(Y)) \mid \C(Y) \Big) \bigm| \Sel(Y) = k 
    \Big]\\
    &= \EE\left[1-\alpha\mid \Sel(Y) = k\right] = 1-\alpha.
    \end{align*}
\end{proof}

To make this concrete, we return to Example~\ref{ex_winner}.

%\begin{example}[label=ex:winners_more]{Example~\ref{ex_winner}, re-visited}
\begin{examplebox}[label=ex:winners_more]{Example~\ref{ex_winner}, re-visited}
% \begin{example}[Examples~\ref{ex_winner} and \ref{ex:bonferroni}, re-visited]\label{ex:winners_more}
We are interested in inference on the mean $\mu_{\Sel(Y)}$ where $\Sel(Y) := \argmax_{k \in \cbr{1,\dots,n}} Y_k$, where $Y_k \indsim  N(\mu_k, \sigma^2)$. At first glance it makes sense to consider the conditional distribution
\begin{equation}\label{eq:winner_v1}
    Y_{\Sel(Y)} \mid \{\Sel(Y) = k\} \sim \mathrm{TN}^{(\max_{k' \neq \Sel(Y)} Y_{k'}, \infty)}(\mu_{\Sel(Y)}, \sigma^2),
\end{equation}
where $ \mathrm{TN}^{(\max_{k' \neq \Sel(Y)} Y_{k'}, \infty)}(\mu_{\Sel(Y)}, \sigma^2)$ denotes a $\Norm(\mu_{\Sel(Y)}, \sigma^2)$ random variable truncated to the region $(\max_{k' \neq \Sel(Y)} Y_{k'}, \infty)$.
However, this lower bound of truncation is random, and depends on the unknown distribution of the second-largest observation. To remove the dependence on this unknown quantity, it suffices to condition on ``more'': the observed value of this lower bound, i.e., the event $\cbr{\max_{k' \neq \Sel(Y)} Y_{k'} = l}$. This yields the tractable truncated normal distribution
\begin{equation}\label{eq:winner_tn}
    Y_{\Sel(Y)} \Bigm| \cbr{\Sel(Y) = k, \max_{k' \neq \Sel(Y)} Y_{k'} = l} \sim \mathrm{TN}^{(l, \infty)}(\mu_{\Sel(Y)}, \sigma^2).
\end{equation}
Standard software can provide a confidence interval for the mean of a truncated normal random variable~\citep{EnvStats-book}. By the logic in Proposition~\ref{prop:monotonicity}, this confidence interval achieves valid conditional coverage~\citep{andrews2024inference}. 
%\end{example}
\end{examplebox}

In practice, carrying out full CSI requires the ability to characterize, or at least sample from, the conditional distribution of $T_{\Sel(Y)}(Y) \mid \C(Y) = c$. In some cases, it may be possible to apply a Monte Carlo strategy~\citep{fithian_optimal_2017}; however, this is not always applicable and is typically quite inefficient. In a  foundational piece of work, \cite{lee2016exact} showed that under multivariate normality it is possible to analytically characterize the conditional distribution of the test statistic for a coefficient in a linear regression model that was selected using the lasso.
Similar ideas were later developed for other selected parameters, such as those arising from  clustering~\citep{gao_selective_2022, chen_selective_2023, yun_selective_2023, chen2025testing}, regression trees~\citep{neufeld2022tree}, changepoints~\citep{hyun2021post, jewell2022testing, carrington_improving_2024}, principal components analysis~\citep{perry2025inference}, and online learning~\citep{chen_optimal_2023}. While preliminary work suggests that in some cases it may be possible to automate the task of characterizing the selection event~\citep{shiraishi_statistical_2025}, in general, developing an analytical characterization for a selection event is a time-consuming task that must be re-done each time a new selection event is considered. Furthermore, regardless of whether a Monte Carlo strategy or analytical characterization of the selection event is performed, full CSI requires that the selection event be formalized, i.e., a parameter selected by way of an  \emph{ad hoc} and unrecorded exploratory data analysis is unlikely to be a suitable candidate for full CSI.

A more nuanced drawback of full CSI is that conditioning on  $\C(Y)$ may leave little Fisher information about the parameter  for inference \citep{fithian_optimal_2017, kivaranovic_length_2021, andrews2024inference}. 
In the best case,  the distribution of the test statistic conditional on selection is ``close'' to its unconditional distribution. For example, in Example~\ref{ex:winners_more}, if $Y_k$ is the clear winner in the sense that $Y_k \gg \max_{k' \neq k} Y_{k'}$, then the confidence interval arising from the truncated normal will be similar to a confidence interval using a $\Norm(\mu_k, 1)$ distribution. However, if  $Y_k \approx \max_{k' \neq k} Y_{k'}$, then the truncated normal distribution will yield an \emph{extremely wide} confidence interval for $\mu_k$. In fact, \citet{kivaranovic_length_2021} prove that confidence intervals resulting from full CSI are infinitely wide in expectation. We return to this issue in \cref{subsec:sim_results}.

\subsection{Sample splitting}
\label{subsec:samp_split}

The use of sample splitting to avoid double dipping dates back to at least \cite{cox1975note}. We split a sample of independent observations $Y_1, \dots, Y_n$ into two disjoint subsets $\Ytrain$ and $\Ytest$, i.e., $\Ytrain \cup \Ytest = Y$ and $\Ytrain \cap \Ytest = \emptyset$. Selection is performed using $\Ytrain$, and inference for the parameter $\theta_{\Sel(\Ytrain)}$ is based on only the data $\Ytest$. 

What statistical guarantees does sample splitting provide? Because the subsets are  independent, for any selection function $\Sel(\cdot)$ and any test statistic $T_{\Sel(\cdot)}(\cdot)$, we have that $T_{\Sel\rbr{\Ytrain}}\rbr{\Ytest} \mid \cbr{\Ytrain = \ytrain} \equalindist T_{\Sel\rbr{\ytrain}}\rbr{\Ytest}$. Consequently, the classical confidence interval in~\cref{fixed:ci} coincides with the conditional confidence interval in \cref{eq:selci_more}: that is,
\begin{equation*}
 \CI^{1-\alpha}(T_{\Sel\rbr{\Ytrain}}(\Ytest) \mid \Ytrain)
 =
 \CI^{1-\alpha}(T_{\Sel\rbr{\Ytrain}}(\Ytest)).
\end{equation*}
Then, noting that $\sigma(\Ytrain) \supseteq \sigma(\Sel(\Ytrain))$, 
Proposition~\ref{prop:monotonicity} guarantees that
\begin{equation*}
 \Pr\rbr{\theta_{\Sel(\Ytrain)} \in \CI^{1-\alpha}(T_{\Sel(\Ytrain)}(\Ytest)) \mid \Sel(\Ytrain) = k}
= 1-\alpha.
\end{equation*}
This means that the very natural strategy of selecting a parameter based on $\Ytrain$ and constructing a classical interval based on $\Ytest$ provides conditional coverage. Furthermore, sample splitting has two key advantages over full CSI: 
\begin{enumerate}
    \item In sample splitting, \emph{no bespoke methods} are needed to construct the confidence interval, and 
 the selection event can be an \emph{ad-hoc}, adaptive exploratory data analysis. 
\item Sample splitting allocates all of the information in $\Ytrain$ solely for selection, and all of the information in $\Ytest$ solely for inference. By modifying the proportion of observations allocated to each set, the analyst has a high degree of control over this allocation (exact control in the setting where observations are identically distributed). By contrast, in full CSI, an unspecified amount of information is leftover for inference. 
 \end{enumerate}
However, a disadvantage of sample splitting is that the information in $\Ytrain$ not required for $\Sel(\Ytrain)$ is discarded; by contrast, in full CSI, no information is wasted. This is discussed further in \cref{subsec:recipe}. 

\begin{remark}\label{ref:samplesplit-winner}
While sample splitting is easy to apply, it is not always applicable! For instance, consider Example~\ref{ex_winner}, the winner's curse. If we applied sample splitting and selected, e.g., $\hat{k}=4$ using $\Ytrain$, then $\Ytest$ would contain no information about $\mu_{\hat k}$, since $\Ytrain \cap \Ytest = \emptyset$. More subtly, sample splitting is not applicable in the context of Example~\ref{ex_cluster}; we discuss this matter in Remark~\ref{remark:cluster_hard}.
\end{remark}

\begin{remark}
    One might think that the test set in sample splitting always contains information about the selected parameter, and thus, unlike full CSI, will not  yield infinitely wide confidence intervals. However, this is not always true, as we saw in Remark~\ref{ref:samplesplit-winner} and will further discuss  in \cref{sec_simulation}.
\end{remark}

\subsection{A unifying recipe}\label{subsec:recipe}

We now show that full CSI and sample splitting are two instances of a more general recipe for obtaining valid conditional coverage.

%\begin{textbox}[h]
\begin{mybox}[label=box:conditional]{Conditional coverage recipe}
%\boxlabel{box:conditional}
%\section{Box 1: Conditional coverage recipe}
    \begin{enumerate}
        \item The data is split into (possibly overlapping, or even identical) selection and inference sets.
        \item The target of inference is selected on the selection set. 
        \item Inference is conducted on the inference set, \emph{conditional on (at least) the event that this target was selected}.
    \end{enumerate}
%\end{textbox}
\end{mybox}

The recipe is outlined in simple language in Box~\ref{box:conditional}, and details are as follows. In Step 1, we construct a selection set $\Ytrain$ and an inference set $\Ytest$ from the full data $Y$; these sets contain a subset of the information in $Y$ but \emph{can be dependent}, and might even be equal to one another. In Step 2, a selection algorithm $\Sel(\cdot)$ selects the target of inference $\theta_{\Sel(\Ytrain)}$ using  $\Ytrain$. In Step 3, we construct a confidence interval $\CI^{1-\alpha}\left(T_{\Sel(\Ytrain)}(\Ytest) \mid \C(\Ytrain)\right)$ as in \cref{eq:selci_more}, where $\sigma \rbr{\C(\Ytrain)} \supseteq \sigma \rbr{\Sel(\Ytrain)}$. By Proposition~\ref{prop:monotonicity}, this interval attains the desired conditional coverage guarantee in \cref{def:conditional_coverage}.

How do sample splitting and full CSI relate to this recipe? In the context of full CSI, $\Ytrain = \Ytest = Y$, and $\sigma \rbr{\C(Y)} \supseteq \sigma \rbr{\Sel(Y)}$. Furthermore, $\sigma \rbr{\C(Y)} = \sigma \rbr{\Sel(Y)}$ if we do not ``condition on more."  In sample splitting, $\Ytrain \cup \Ytest = Y$ are disjoint sets, and $\sigma \rbr{\C(\Ytrain)} = \sigma \rbr{\Ytrain} \supseteq \sigma(\Sel(\Ytrain))$.

\subsection{The Fisher information trade-off}
\label{subsec:fisher}

We now consider how the recipe in Box~\ref{box:conditional} allocates information between selection and inference. 
This information tradeoff can be formalized through the lens of Fisher information, assuming that we are in a setting where the Fisher information $\Ical_{Y}(\theta)$ is well-defined for $Y \sim F$ parameterized by $\theta$.

First, we discuss the Fisher information available for \emph{selection} (Step 2 in Box~\ref{box:conditional}).  We note
that
$$\Ical_{\Sel(\Ytrain)} (\theta) \preceq \Ical_{\Ytrain} (\theta) \preceq \Ical_{Y} (\theta).$$

Thus, in a setting where $\Ytrain \subseteq Y$, the amount of information used for selection, $\Ical_{\Sel(\Ytrain)}(\theta)$, is bounded above by $\Ical_{\Ytrain}(\theta)$, the total information in $\Ytrain$, which is in turn at most $\Ical_{Y}(\theta)$, the information available for selection under full CSI. Intuitively, we might therefore expect that full CSI leads, in some sense, to a higher-quality selection event than sample splitting; we will see in \cref{sec_simulation} that this is typically true.

We now turn to the Fisher information available for \emph{inference}.
We use the notation $\Ical_{\Ytest \mid \C(\Ytrain) = \C(\ytrain)}(\theta)$
to represent the ``leftover Fisher information'' \citep{fithian_optimal_2017}; this is the Fisher information in the conditional distribution $\Ytest \mid \cbr{\C(\Ytrain) = \C(\ytrain)}$ with respect to a likelihood parameterized by $\theta$ (i.e., it is the information leftover from selection that is available for inference). Intuitively, the amount of leftover Fisher information should correspond to the width of the resulting confidence intervals (i.e., more leftover Fisher information will typically correspond to narrower intervals).  
In full CSI, the leftover Fisher information is $\Ical_{Y \mid \C(Y) = \C(y)}\rbr{\theta}$, while in sample splitting it is $\Ical_{\Ytest}\rbr{\theta}$. We note the following:
\begin{enumerate} 
    \item In full CSI, it is possible for $\Ical_{Y \mid \C(Y) = \C(y)}\rbr{\theta}$ to be arbitrarily close to zero (roughly speaking, this happens when we ``barely" selected $\Sel(y)$, i.e., $y$ is near the boundary of the selection event~\citep{fithian_optimal_2017}). In this case, full CSI will yield extremely-- or even infinitely-- wide confidence intervals. By contrast, when the conditional distribution of $Y \mid \cbr{\C(Y) = \C(y)}$  approximately equals the marginal distribution  of $Y$ (roughly speaking, when $y$ is deep in the interior of the selection event), full CSI yields $\Ical_{Y \mid \C(Y) = \C(y)}\rbr{\theta} \approx \Ical_{Y}\rbr{\theta}$. Thus, in this case, full CSI retains nearly all of the Fisher information about $\theta$ for inference, and the selective intervals are nearly identical to the classical intervals. 
    \item For sample splitting, in a setting where the observations are independent and identically distributed (i.i.d.) and a non-zero proportion $\epsilon$ of the observations are allocated to the $\Ytrain$, the leftover Fisher information $\Ical_{\Ytest}\rbr{\theta}=(1-\epsilon) \Ical_{Y}\rbr{\theta}$. While this means that the information available for inference in the i.i.d. setting is bounded away from $0$ and so intervals will never be infinitely wide, it also means that this information never approaches $\Ical_{Y}\rbr{\theta}$. 
\end{enumerate}

The information available for selection versus the information available for inference form two pieces of an  information tradeoff. For any parameter $\theta$, the total Fisher information available in $Y$ can be decomposed as
\begin{equation}\label{eq:fisher_decomp}
    \Ical_{Y}(\theta)
    =
    \Ical_{\Sel(Y)}(\theta) + \EE\sbr{\Ical_{\C(Y) \mid \Sel(Y)}(\theta)} + \EE\sbr{\Ical_{Y \mid \C(Y)}(\theta)}
    \succeq
    \Ical_{\Sel(Y)}(\theta) + \EE\sbr{\Ical_{Y \mid \C(Y)}(\theta)}. % + \EE\sbr{\Ical_{\C(Y) \mid \Sel(Y)}(\theta)}.
\end{equation}
Here, $\Ical_{\Sel(Y)}(\theta)$ is the Fisher information about $\theta$ in the data used for selection, and $\EE[\Ical_{Y \mid \C(Y)}(\theta)]$ is the average leftover Fisher information used for inference. This decomposition suggests an inherent tradeoff: the more  information is used for selection, the less average leftover Fisher information is available for inference~\citep{rasines_splitting_2023, neufeld_data_2024, dharamshi_generalized_2024, leiner2025data}. Moreover,  $\EE\sbr{\Ical_{\C(Y) \mid \Sel(Y)}(\theta)}$ represents the expected information lost due to the fact that we have ``conditioned on more"  in the sense of Proposition~\ref{prop:monotonicity}. 

\begin{remark}\label{remark:fisher_decomp}
In this section, we have discussed the Fisher information trade-off for the entire parameter $\theta$. However, for inference purposes, we really care about $\Ical_{Y \mid \C(\Ytrain)=\C(\ytrain)}(\theta_{\Sel(\ytrain)})$, as opposed to $\Ical_{Y \mid \C(\Ytrain)=\C(\ytrain)}(\theta)$. Fully characterizing an information trade-off involving the former may be a fruitful area for future work. 
\end{remark}

%% file: sections/sec4_methods_2.tex
\section{Further extensions of the general recipe}\label{sec:methods_2}

\cref{sec:methods} reviewed two of the best-known frameworks for achieving conditional coverage in the sense of Definition \ref{def:conditional_coverage}, full CSI and sample splitting, and showed that both follow the ``recipe'' outlined in Box~\ref{box:conditional}. In this section, we review  several additional proposals for constructing estimators that achieve conditional coverage, and show --- perhaps surprisingly --- that they all follow the same recipe. The details of $\Ytrain$, $\Ytest$, and $\sigma\rbr{\C(\Ytrain)}$ for full CSI, sample splitting, and the strategies discussed in this section are given in \cref{tab:conditional}.

\begin{table}[!htb]
    \centering
    \caption{Proposals for achieving conditional coverage that follow the recipe in Box~\ref{box:conditional}, summarized by  $\Ytrain$, $\Ytest$, and $\C(\Ytrain)$. Here $\zeta$ denotes an auxiliary random variable, and $g_{\text{sel}}$ and $g_{\text{inf}}$ denote specific functions of $Y$ and $\zeta$. A representative reference is provided for each proposal.}
    \label{tab:conditional}
    \begin{tabular}{ llllll } 
     \hline
     Framework & $\Ytrain$ & $\Ytest$ & $\sigma \rbr{\C(\Ytrain)}$ &  Note & Representative reference \\ 
     \hline
     Full CSI & $Y$ & $Y$ & $\supseteq \sigma \rbr{\Sel(\Ytrain)}$ & & \citet{lee2016exact} \\
     Sample splitting & $\subset Y$ & $Y \setminus \Ytrain$ & $= \sigma \rbr{\Ytrain}$ & $\Ytrain \independent \Ytest$ & \citet{cox1975note} \\
     Data carving & $\subset Y$ & $Y$ & $\supseteq \sigma \rbr{\Sel(\Ytrain)}$ & &\citet{fithian_optimal_2017}\\
     Data thinning & $g_{\text{sel}}(Y, \zeta)$ & $g_{\text{inf}}(Y, \zeta)$  & $= \sigma \rbr{\Ytrain}$ & $\Ytrain \independent \Ytest$ & \citet{dharamshi_generalized_2024} \\
    Randomized CSI & $g_{\text{sel}}(Y, \zeta)$ & $Y$ & $\supseteq \sigma \rbr{\Sel(\Ytrain)}$ & &\citet{tian_selective_2018} \\
     Data fission & $g_{\text{sel}}(Y, \zeta)$ & $g_{\text{inf}}(Y, \zeta)$ & $\supseteq \sigma \rbr{\Ytrain}$ & & \citet{leiner2025data} \\
     \hline
    \end{tabular}
\end{table}

\subsection{Data carving}

We saw in Section~\ref{subsec:fisher} that in sample splitting, none of the information in $\Ytrain$ is available for inference. 
Specifically, following from 
\eqref{eq:fisher_decomp}, sample splitting discards the information in $\EE\sbr{\Ical_{\Ytrain \mid \Sel(\Ytrain)}(\theta)}$. 

\citet{fithian_optimal_2017} propose \emph{data carving}, an  alternative to sample splitting that avoids discarding $\EE\sbr{\Ical_{\Ytrain \mid \Sel(\Ytrain)}(\theta)}$. In the language of  Box~\ref{box:conditional}, we set $\Ytrain \subset Y$ as in sample splitting, and set $\Ytest = Y$. Then inference is conducted using the conditional distribution of the test statistic $T_{\Sel(\Ytrain)}(Y) \mid \cbr{\Sel(\Ytrain) = \Sel(\ytrain)}$. Thus, the total Fisher information in $Y$ can decomposed as
\begin{equation}\label{eq:fisher_decomp_carving}
    \Ical_{Y}(\theta)
    =
    \Ical_{\Sel(\Ytrain)}(\theta) + \underbrace{\EE\sbr{\Ical_{Y \mid \Sel(\Ytrain)}(\theta)}}_{\text{Carving uses for inference}}
    =
    \Ical_{\Sel(\Ytrain)}(\theta) + \underbrace{\EE\sbr{\Ical_{\Ytrain \mid \Sel(\Ytrain)}(\theta)}}_{\text{SS discards}} + \underbrace{\Ical_{Y \setminus \Ytrain}(\theta)}_{\text{SS uses for inference }}.
\end{equation}

We close with a comparison of data carving to sample splitting and full CSI:
\begin{enumerate}
    \item Relative to sample splitting:
    \begin{enumerate}
        \item Data carving uses the same information for selection. 
        \item As mentioned earlier, data carving uses more Fisher information for inference because $\EE\left[\Ical_{Y \mid \Sel(\Ytrain)}(\theta)\right] \succeq \Ical_{Y \setminus \Ytrain}(\theta)$ in \eqref{eq:fisher_decomp_carving} In fact, \cite{fithian_optimal_2017} show that it is strictly more powerful. 
        \item Out-of-box tools cannot be used for inference, because $\Ytrain$ and $\Ytest$ are not independent. 
    \end{enumerate}
    \item Relative to full CSI:
    \begin{enumerate}
    \item Data carving allocates less information for selection, and so it typically will have lower ``quality" of selection.
 \item  The Fisher information $\Ical_{\Ytrain \mid \Sel(\Ytrain) = \Sel(\ytrain)}(\theta)$ available for inference in data carving is bounded above zero when observations are i.i.d.
 \item The distribution of $T_{\Sel(\ytrain)}(Y) \mid \{\Sel(\Ytrain) = \Sel(\ytrain)\}$ may be challenging to work with, as is often the case for full CSI. 
 \end{enumerate}
\end{enumerate}
2(c) above highlights that data carving can be computationally challenging. While some challenges may be alleviated by instead conducting inference using the conditional distribution of  $T_{\Sel(\ytrain)}(Y) \mid \rbr{\C(\Ytrain) = \C(\ytrain)}$ for a carefully-chosen $\C(\cdot)$ (as in Proposition~\ref{prop:monotonicity}), in general software for data carving is not readily available. 

\subsection{Data thinning}\label{subsec:data_thin}

In Section~\ref{subsec:samp_split}, we saw that sample splitting decomposes the data into two independent pieces by partitioning the $n$ observations. Is there a different way to decompose a dataset into independent pieces?

To make this idea concrete, suppose $Y \sim \Norm(\mu,\sigma^2)$ where $\sigma$ is known. For a user-specified constant $\epsilon \in (0, 1)$, draw $\zeta \sim \Norm(0, \epsilon(1-\epsilon)\sigma^2)$, and let $\Ytrain = \epsilon Y + \zeta$ and $\Ytest = Y - \Ytrain$. Then one can show that $\Ytrain \sim \Norm(\epsilon \mu, \epsilon \sigma^2)$, $\Ytest \sim\Norm((1-\epsilon)\mu, (1-\epsilon)\sigma^2)$, and $\Ytrain$ and $\Ytest$ are independent~\citep{robins2006adaptive, rasines_splitting_2023}. The choice of $\epsilon$ is the fraction of the total Fisher information allocated to selection, with the remaining $(1-\epsilon)$ allocated for inference~\citep{neufeld_data_2024}. We illustrate in Example~\ref{ex:winners_datathin} how this is useful in the context of the winner's curse~\citep{ma_breaking_2023}. A similar decomposition of a Poisson random variable $Y \sim \text{Poisson}(\mu)$ into independent components is also a well-known result~\citep{casella2002statistical}.

%\begin{example}[label=ex:winners_datathin]{Data thinning for winner's curse}
\begin{examplebox}[label=ex:winners_datathin]{Data thinning for winner's curse}
In the setting of Example~\ref{ex_winner}, suppose that $Y_i \sim N(\mu_i, \sigma^2)$. We apply the decomposition described above to each $Y_i$, $i=1,\ldots,n$ to obtain independent $\Ytrain$ and $\Ytest$ that each have dimension $n$. We then select $\Sel(\Ytrain) = \argmax_{k \in [1, \dots, n]} \Ytrain_k$, and conduct inference using the distribution of the test statistic $\Ytest_k \mid \cbr{\Sel(\Ytrain) = k}$, which thanks to independence is simply $\Ytest_k \sim \Norm((1-\epsilon)\mu_k, (1-\epsilon)\sigma^2)$.
%\end{example}
\end{examplebox}

\cite{neufeld_data_2024} and \cite{dharamshi_generalized_2024} showed that the ability to decompose a random variable $Y$ into two independent and non-trivial components extends far beyond the special cases of Gaussian and Poisson distributions. Their \emph{data thinning} framework enables  $Y$ to be ``thinned" (decomposed) into two independent components which are sufficient to reconstruct $Y$, provided that $Y \sim F(\theta)$ for a distributional family $F(\cdot)$ that is known up to an unknown parameter $\theta$. Distributional families that can be thinned in this way include binomial, multinomial, Gamma, exponential, truncated uniform, and negative binomial (with known overdispersion), among others.

How does this relate to Box~\ref{box:conditional} and Table~\ref{tab:conditional}?
In the language of Table~\ref{tab:conditional},  $\Ytrain$ and $\Ytest$ are \emph{independent} random variables generated by applying the functions 
$g_{\text{sel}}(\cdot)$ and $g_{\text{inf}}(\cdot)$ to 
$Y$ and an auxiliary random variable $\zeta$. For example, in the case of $Y \sim N(\mu,\sigma^2)$, we take $\zeta \sim \Norm(0, \epsilon (1-\epsilon)\sigma^2)$, $g_{\text{sel}}(Y, \zeta)=\epsilon Y+\zeta$, and $g_{\text{inf}}(Y, \zeta)= (1-\epsilon) Y - \zeta$. By virtue of independence between $\Ytrain$ and $\Ytest$, selection can be conducted using $\Ytrain$ and inference using $\Ytest$.
Further details regarding the distribution of $\zeta$ and the functions $g_{\text{sel}}(\cdot)$ and $g_{\text{inf}}(\cdot)$ for specific distributional families are provided in \cite{dharamshi_generalized_2024}.

At a high level, data thinning is quite similar to sample splitting.
\begin{enumerate}
\item Both approaches split the data into two independent pieces. This enables  inference on the test set using out-of-the-box tools. 
\item Both methods involve a tuning parameter that determines the amount of Fisher information in the training set versus the test set~\citep{neufeld_data_2024}.
\end{enumerate}

Yet data thinning differs from sample splitting in notable ways:
\begin{enumerate}
    \item We see in Example~\ref{ex:winners_datathin} that data thinning can be applied to the winner's curse, whereas sample splitting cannot (see Remark~\ref{ref:samplesplit-winner}). Other settings where data thinning can be applied but sample splitting cannot include inference on an estimated   network~\citep{chen_estimating_2025, ancell_post-selection_2026} and inference after clustering~\citep{neufeld2024inference}. 
    \item If the samples are not identically distributed, then data thinning may lead to narrower confidence intervals than sample splitting, even when the information allocated for selection is identical. This follows from an application of Jensen's inequality \citep{rasines_sampling_2025,neufeld_data_2024}.
    \item The application of data thinning is limited to certain families of distributions, and nuisance parameters may preclude its use. In some settings where data thinning cannot be directly applied, asymptotic Gaussianity of a suitable statistic may enable asymptotically valid inference ~\citep{rasines_splitting_2023, lei_discussion_2025, perry_post-selection_2026}.
\end{enumerate}

\subsection{Randomized conditional selective inference}\label{subsec:rcsi}

We saw in \eqref{eq:fisher_decomp_carving} that sample splitting leaves information on the table: in particular, it discards  $\EE\sbr{\Ical_{\Ytrain \mid \Sel(\Ytrain)}(\theta)}$. Data carving exploited this information by considering the conditional distribution of $T_{\Sel(\Ytrain)}(Y) \mid \cbr{\Sel(\Ytrain) = \Sel(\ytrain)}$ instead of $T_{\Sel(\Ytrain)}(\Ytest)$ where $\Ytrain \independent \Ytest$.
Like sample splitting, data thinning discards $\EE\sbr{\Ical_{\Ytrain \mid \Sel(\Ytrain)}}$. In this section, we present \emph{randomized CSI}, which makes use of the information left unused by data thinning. 

As was the case for data thinning, for selection, randomized CSI uses $\Ytrain=g_{\text{sel}}(Y, \zeta)$ for some user-generated auxiliary random variable $\zeta$. For inference, however, randomized CSI uses \emph{all} of the data, i.e., $\Ytest = Y$. Since $\Ytrain$ and $\Ytest$ are not independent, inference is conducted using the conditional distribution $T_{\Sel(\Ytrain)}(\Ytest) \mid \left\{ \Sel(\Ytrain)=\Sel(\ytrain) \right\}$ (or, possibly,   $T_{\Sel(\Ytrain)}(\Ytest)\mid \left\{ \C(\Ytrain)=\C(\ytrain) \right\}$ where $\sigma \rbr{\C(\Ytrain)} \supseteq \sigma \rbr{\Sel(\Ytrain)}$; see Proposition~\ref{prop:monotonicity}). 

The Fisher information in randomized CSI can be decomposed and related to that of data thinning (assuming they make use of the same $g_{\text{sel}}(Y, \zeta)$) as follows:
\begin{align*}
    \Ical_{Y}(\theta)
    &=
    \Ical_{\Sel(g_{\text{sel}}(Y, \zeta))}(\theta)
    + \underbrace{\EE\sbr{\Ical_{Y\mid \Sel(g_{\text{sel}}(Y, \zeta))}(\theta)}}_{\text{RCSI uses for inference}}\\
    &= \Ical_{\Sel(g_{\text{sel}}(Y, \zeta))}(\theta)
    + 
    \underbrace{\EE\sbr{
    \Ical_{g_{\text{sel}}(Y, \zeta) \mid \Sel(g_{\text{sel}}(Y, \zeta))}(\theta)
    }}_{\text{Thinning discards}} + \underbrace{\EE\sbr{\Ical_{g_{\text{inf}}(Y, \zeta)}(\theta)}}_{\text{Thinning uses for inference}}.
\end{align*}

Randomized CSI  was first proposed by \cite{tian_selective_2018}, and has since been instantiated in a variety of settings including penalized (generalized) linear models~\citep{tian_selective_2018, tian_magic_2016, panigrahi_integrative_2021, panigrahi2023approximate, panigrahi_carving_2023, huang_selective_2024}, randomized regression trees~\citep{bakshi2024inference}, and maximal contrasts~\citep{perry2024infer}. Randomized CSI also arises from starting with a data thinning estimator and Rao-Blackwellizing it \citep{ma_breaking_2023}.

We close this section with the following notes:
\begin{enumerate}
\item Randomized CSI avoids the infinite width confidence intervals that can arise from full CSI. While one might think that this comes at the cost of degraded ``selection quality", it turns out that a very small amount of randomization suffices to drastically reduce confidence interval widths relative to full CSI ~\citep{tian_selective_2018}.
\item Compared to data thinning, the cost of more information available for inference is much less flexibility. While inference in data thinning only requires knowledge of the distribution of $\Ytest$, randomized CSI requires the derivation of a new conditional distribution for each selection rule of interest.
\end{enumerate}

\subsection{Data fission}\label{subsec:fission}

\cite{leiner2025data} propose \emph{data fission}, a framework for decomposing  a random variable $Y$ into two pieces $\Ytrain$ and $\Ytest$ such that (i) both pieces contain information about the unknown parameter of interest, and (ii) both the marginal distribution of $\Ytrain$ and the conditional distribution of $\Ytest \mid \Ytrain$ are tractable. In the language of Table~\ref{tab:conditional}, data fission sets $\Ytrain$ and $\Ytest$ to be random variables generated from $Y$ and a user-generated auxiliary random variable $\zeta$ via functions $g_{\text{sel}}$ and $g_{\text{inf}}$. In the special case where $\Ytrain$ and $\Ytest$ are independent, the framework corresponds exactly to data thinning, described in Section~\ref{subsec:data_thin}. In this paper, we use the term ``data fission" to refer specifically to the case where $\Ytrain$ and $\Ytest$ are not independent. 

\cite{leiner2025data} suggest performing selection using $\Ytrain$ and using the conditional distribution of $\Ytest \mid \Ytrain$ for inference, meaning that $\sigma\rbr{\C(\Ytrain)} = \sigma\rbr{\Ytrain}$. Provided that $g_{\text{sel}}(Y, \zeta)$ and $g_{\text{inf}}(Y, \zeta)$ are sufficient to reconstruct $Y$, we can relate the Fisher information in data fission to that in randomized CSI:
\begin{align*}
    \Ical_{Y}(\theta)
    &=
    \Ical_{\Sel(g_{\text{sel}}(Y, \zeta))}(\theta)
    + \underbrace{\EE\sbr{\Ical_{Y\mid \Sel(g_{\text{sel}}(Y, \zeta))}(\theta)}}_{\text{RCSI uses for inference}}\\
     &= \Ical_{\Sel(g_{\text{sel}}(Y, \zeta))}(\theta)
     + 
     \EE\sbr{
     \Ical_{g_{\text{sel}}(Y, \zeta),g_{\text{inf}}(Y, \zeta) \mid \Sel(g_{\text{sel}}(Y, \zeta))}(\theta)
     } \\
    &= \Ical_{\Sel(g_{\text{sel}}(Y, \zeta))}(\theta)
    + 
    \underbrace{\EE\sbr{
    \Ical_{g_{\text{sel}}(Y, \zeta) \mid \Sel(g_{\text{sel}}(Y, \zeta))}(\theta)
    }}_{\text{Fission discards}} + \underbrace{\EE\sbr{\Ical_{g_{\text{inf}}(Y, \zeta) \mid g_{\text{sel}}(Y, \zeta)}(\theta)}}_{\text{Fission uses for inference}}.
\end{align*}
Compared to randomized CSI, this can be viewed as \emph{conditioning on more}, as $\sigma\rbr{\Ytrain} \supseteq \sigma\rbr{\Sel(\Ytrain)}$. The benefit of this approach is that, while randomized CSI requires a conditional distribution to be derived for \emph{each selection event of interest}, data fission conditions on all of $\Ytrain$, and thus conditional inference is valid no matter the form of $\Sel(\cdot)$. 

Both data fission and data thinning use $\sigma\rbr{\C(\Ytrain)} = \sigma\rbr{\Ytrain}$, but since the former does not require independence of $\Ytrain$ and $\Ytest$, it has the potential to be more broadly applicable than data thinning. For instance, it can be applied in settings where data thinning is misspecified \citep{neufeld2025discussion,dharamshi2025decomposing}. 
It is also possible to apply fission in certain settings where thinning is impossible. For example, it is impossible to thin a Bernoulli random variable~\citep{dharamshi_generalized_2024}, but fission has recently been applied to binary data in the contexts of penalized logistic regression~\citep{leiner2025data, neufeld2025discussion} and networks~\citep{ancell_post-selection_2026}. We highlight a few key features of data fission below. 
\begin{enumerate}
\item Data fission requires  inference using the conditional distribution of $\Ytest \mid \Ytrain$. Inference on the parameter of interest in this conditional distribution can be surprisingly difficult \citep{dharamshi2025decomposing, neufeld2025discussion, ancell_post-selection_2026}, and standard statistical software cannot be used. In some cases, we can still further ``condition on more" to make inference tractable \citep{perry2024infer} (see Proposition~\ref{prop:monotonicity}). It is worth noting that the practical challenges that arise in applying data fission are quite different from those that arise with full or randomized CSI. In the latter, challenge arises due to the need to characterize $T_{\Sel(\Ytrain)}(\Ytest)\mid\cbr{\Sel(\Ytrain) = \Sel(\ytrain)}$. By contrast, in the former, $\Ytest \mid \Ytrain$ is tractable by design: however, the parameter of interest $\theta_{\Sel(\Ytrain)}$ is often entangled in this conditional distribution in such a way that inference is extremely challenging.
\item While the conditional distribution of $\Ytest \mid \Ytrain$ can be challenging to work with, it does not need to be separately specified for every selection event $\Sel(\Ytrain)$ (in contrast to full or randomized CSI, for example). Thus, innovations in inference after fission in one setting may be easily transferrable to new settings with different selection events, mirroring the flexibility of data thinning and sample splitting. 
\item Unlike data thinning and sample splitting, the user 
does not have fine-grained control over the amount of Fisher information allocated to $\Ytrain$ versus $\Ytest \mid \Ytrain$ \citep{dharamshi2025decomposing,neufeld2025discussion}.
\end{enumerate}

%% file: sections/sec5_sim.tex
\section{Simulation Study}
\label{sec_simulation}

We now revisit Example~\ref{ex_tree}.  Our goal is to conduct inference on the mean of a region estimated using a regression tree algorithm, such as CART \citep{breiman1984classification}. 

Why might such inference be of interest? As one example, \cite{loh2019subgroup} use regression trees to identify subgroups of patients that  share a common mean value of a response variable. Valid confidence intervals for the population mean within each patient subgroup could be used to ascertain whether the observed differences between estimated subgroups are noise artifacts or contain true signal.

We saw in Table~\ref{tab:classical-coverages} that conducting inference on the mean within each subgroup using the classical interval in \eqref{eq:ex2-classical} will not achieve the nominal coverage, since the subgroups themselves (and, therefore, the mean parameter within each subgroup) are  functions of the data.  
In what follows, we explore methods that yield confidence intervals that achieve valid conditional coverage, in the sense of \eqref{conditional:ci}, where the selection event is the event that a particular region of covariate space was identified by a regression tree algorithm. 

\subsection{Simulation setup and inferential goals}
\label{subsec_setup}

We generate $(X_i, Y_i)$ for $i=1,\ldots,n$, with $X_i \overset{\mathrm{ind.}}\sim N_p(0, I_p)$ and $Y_i  \overset{\mathrm{ind.}}\sim N(\mu_i, \sigma_y^2)$. We treat $X=x$ as fixed, and let $\mu \in \mathbb{R}^n$ be a piecewise constant model over
regions of covariate space. Specifically, along the lines of \cite{neufeld2022tree} and \cite{bakshi2024inference}, we let 
\begin{equation}
\label{eq_truetree}
\mu_i = \beta \cdot \boldsymbol{1}(x_{i1} < 0) + \beta  \cdot \boldsymbol{1}(x_{i1} < 0 \ \& \ x_{i2} > 0) + \beta \cdot \boldsymbol{1}(x_{i1} > 0 \ \& \ x_{i3} > 1).
\end{equation}
Throughout our simulations, we fix $n=200,p=20,$ and $\sigma_y = 2.5$. We generate 1000 datasets in each of three signal settings: ``weak", ``medium", and ``strong", which correspond to $\beta=1, 2, 5$, respectively. 

\begin{mybox}[label=box:tree]{General procedure for inference on a two-level regression tree}
%\begin{textbox}[h]
%\boxlabel{box:tree}
%\section{Box 2: General procedure for inference on a two-level regression tree}
    \begin{enumerate}
        \item Fit a two-level regression tree to construct regions $\hat{R}_k$ for $k=1,\ldots,4$ such that (i) $\hat{R}_k \subset \mathbb{R}^p$, (ii) $\hat{R}_k \cap \hat{R}_{k'} = \emptyset$, and (iii) $\bigcup_{k=1}^4 \hat{R}_k= \mathbb{R}^p$. 
        \item Construct a $1-\alpha$ confidence interval for $\mu_{\hat{R}_k} = \frac{1}{|x_i  \in {\hat{R}_k}|} \sum_{x_i \in {\hat{R}_k}} \mu_i$, for $k=1,\ldots,4$. 
 \end{enumerate}
%\end{textbox}
\end{mybox}

%We consider the procedure outlined in Box~\ref{box:tree}.
In Section~\ref{subsec_compmethods}, we apply ideas from Sections~\ref{sec:methods} and \ref{sec:methods_2}  to the procedure in Box~\ref{box:tree}. 

\subsection{Methods for Comparison}
\label{subsec_compmethods}

We compare the following instantiations of Box~\ref{box:tree}. %for carrying out our task. %For simplicity, we assume throughout this section that $\sigma_y$ is known. 

\begin{list}{}{}
\item[\textbf{Classical method}:] We use all of the data both to fit a CART regression tree  in Step 1  of Box~\ref{box:tree}, and to construct classical Z-intervals~\cref{eq:ex2-classical} in Step 2 of Box ~\ref{box:tree}. 
%, we let $\xtrain = \xtest = x$, $\Ytrain=\Ytest=Y$, and construct classical Z-intervals (that ignore selection). %(using the known value of $\sigma_y$) for $\mu_{\hat{R}_k}$ for $k=1,\ldots,4$. %In other words, we apply Box~\ref{box:tree} with $\Ytrain = \Ytest = Y$, and classical Z-intervals.  
(We already know from Table~\ref{tab:classical-coverages} that this method does not achieve nominal coverage, but we include it for the sake of comparison.) 

\item[\textbf{Sample splitting:}] For $\epsilon \in \{0.5, 0.75, 0.9\}$, we use  $\epsilon n$ observations to fit a CART tree in Step 1 of  Box~\ref{box:tree}, and use the remaining $(1-\epsilon)n$ observations   to construct a classical Z-interval~\cref{eq:ex2-classical} for $\mu_{\hat{R}_1},\ldots,\mu_{\hat{R}_4}$ defined in Step 2 of Box~\ref{box:tree}. 
If no test observations are contained in $\hat{R}_k$, then the test observations contain no information about $\mu_{\hat{R}_k}$, and the corresponding confidence interval is $[-\infty, \infty]$. 
\item[\textbf{Data thinning:}] For $\epsilon \in \{0.5, 0.75, 0.9\}$, and for $i=1,\ldots,n$,  we apply Gaussian data thinning to decompose $Y_i$ into \emph{independent} components $\Ytrain_i$ and $\Ytest_i$,   which contain proportions $\epsilon$ and $1-\epsilon$ of the Fisher information about $\mu_i$. (See \cite{neufeld_data_2024} for details about Fisher information for data thinning.)
We then fit a CART tree on $(X,\Ytrain)$ in Step 1 of Box~\ref{box:tree},  and in Step 2  construct a classical Z-interval~\cref{eq:ex2-classical} for $\mu_{\hat{R}_k}$ using $(X, (1-\epsilon)^{-1} \Ytest)$; this scaling is required because $\E(\Ytest) = (1-\epsilon) \E(Y)$. 
Since the same $X$ values are used in Steps 1 and 2, the test set always contains information about  $\mu_{\hat{R}_k}$, and thus all confidence intervals have finite width.
\item[\textbf{Full conditional selective inference:}] 
We use all of the data  to fit a CART tree  in Step 1  of Box~\ref{box:tree}. We then conduct inference in Step 2 of Box ~\ref{box:tree} conditional on the event that the regions $\hat{R_1},\ldots,\hat{R}_4$ were selected from the data, using the proposal of \cite{neufeld2022tree}. 
\item[\textbf{Randomized conditional selective inference:}] We apply the proposal of \cite{bakshi2024inference}, which conducts Step 1 in  Box~\ref{box:tree} by applying a ``randomized" version of the CART  algorithm to the full data. This is an instantiation of randomized CSI, but instead of applying the usual deterministic CART algorithm to $(X, \Ytrain)$ where $\Ytrain$ is a noisy version of $Y$, we instead apply a randomized version of  the CART algorithm to $(X,Y)$. 
This randomized version  involves the injection of noise with variance $\tau^2$; we consider $\tau^2 \in \{0.25 \sigma_y^2, \sigma_y^2, 4 \sigma_y^2\}$. Intuitively, a larger value of $\tau^2$ corresponds to using less Fisher information  in Step 1, and thus reserving more  for Step 2. 
In Step 2 of Box~\ref{box:tree}, a  Z-interval is constructed for $\mu_{\hat{R}_k}$ using the distribution of $Y$ conditional on the event that the randomized CART algorithm selected the regions $\hat{R}_1,\ldots,\hat{R}_4$. 
\end{list}

\begin{remark}
While the classical method and sample splitting can seamlessly accommodate the setting where $\sigma_y^2$ is unknown, theoretical guarantees for the other methods require $\sigma_y^2$ to be known. In our simulations, we apply all methods using the true value of $\sigma_y^2$. 
\end{remark}

\begin{remark} A data fission approach to Box~\ref{box:tree}
could avoid the need to know $\sigma_y^2$, but the literature does not yet contain the details for such an approach. 
\end{remark}

\subsection{Metrics for Comparison}
\label{subsec_metrics}

We compare the five methods introduced in Section~\ref{subsec_compmethods} according to the following metrics. 
\begin{list}{}{}
\item \textbf{Selection quality}: Each  two-level tree that we fit using CART or randomized CART partitions the $n$ observations into four groups (corresponding to the regions $\hat{R}_1,\ldots,\hat{R}_4$). The true model  underlying the data \cref{eq_truetree} likewise assigns each of the $n$ observations to one of four groups. The adjusted Rand index \citep{hubert1985comparing} between these estimated and true assignment vectors captures whether the ``true" regions were accurately selected.  
\item \textbf{Coverage:} For every selected region $\hat{R}_k$ in every estimated tree, we check whether the corresponding confidence interval contains the corresponding parameter $\mu_{\hat{R}_k}$. 
\item \textbf{Confidence interval length:} We record the lengths of the confidence intervals.  
\end{list}

\subsection{Results}\label{subsec:sim_results}

The results of our simulation study, with $\alpha=0.1$ in Step 2 of Box~\ref{box:tree}, are in Figure~\ref{fig_simresults}.

First of all, as expected, all methods besides the classical method attain the nominal coverage of $90\%$. Furthermore, as discussed in Section~\ref{sec_coverage}, even the classical method attains nominal coverage when enough signal is present in the data (i.e., when $\beta=5$ in~\cref{eq_truetree}). 

Second, the selection quality of all methods increases with signal strength, as expected. Full CSI has the same selection quality as the classical method, since their selection events are identical. By contrast, for sample splitting, data thinning, and randomized CSI, the selection quality decreases as less information is allocated for selection (i.e. as $\epsilon$ decreases for sample splitting and data thinning, or as $\tau^2$ increases for randomized CSI).

In the following subsections, we discuss the trade-off between selection quality and confidence interval length for each method that attains nominal coverage (i.e., for all but the classical method).

\subsubsection{Sample splitting}

As $\epsilon$ --- the proportion of observations used for selection  --- decreases, both the quality of selection and the length of the  confidence intervals decreases. 
When the signal is high, it may be the case that not all of the information in the training set is required for selection; however, since sample splitting discards the information in the training set that is not used for selection (see Section~\ref{sec:methods}), 
the length of the resulting intervals does not vary with the amount of signal in the data.

As mentioned in Section~\ref{subsec_compmethods}, when we conduct sample splitting, it is possible that a region selected in Step 1 of Box~\ref{box:tree} may contain no test set observations. In that case, there is no information about the selected parameter in the test set, and so Step 2 of Box~\ref{box:tree} will yield a  confidence interval  of $[-\infty, \infty]$. For visualization purposes, such intervals are plotted as having length 10,000 in the right panel of Figure~\ref{fig_simresults}.  
 
The presence of infinite-width confidence intervals reflects a more general property of sample splitting: while  sample splitting will use \emph{exactly} a fraction $1-\epsilon$ of the Fisher information in the data for inference when the observations are i.i.d., when the observations are not i.i.d. (as in this example, because $X$ is fixed) the proportion used for inference is  $1-\epsilon$ \emph{in expectation}. In any given data realization, it is possible that no Fisher information at all will be left over for inference!  

\subsubsection{Data thinning}

Data thinning yields similar selection quality to sample splitting, but with the pronounced advantage that  the expected length of confidence intervals is finite. This is because --- unlike sample splitting --- data thinning  places \emph{exactly} $(1-\epsilon)$ of the information about any unknown parameter in the inference set \citep{neufeld_data_2024}. Of course, data thinning also suffers from a major disadvantage relative to sample splitting: whereas the latter requires only independent observations, the former requires that  $Y_i \overset{\mathrm{ind.}}{\sim} N(\mu_i, \sigma_y^2)$,  with $\sigma_y^2$ known.

\subsubsection{Full conditional selective inference}

While full CSI enjoys the highest selection quality among the methods considered, this comes at a cost: it can yield very wide (or even infinitely wide) confidence intervals, especially when the signal is weak. This is because full CSI leaves very little information  for inference in the case where the selection event is not ``obvious"
\citep{kivaranovic_length_2021, fithian_optimal_2017, kuchibhotla2022post}.  As the signal strength increases, more information is left over for inference, and the median length of full CSI confidence intervals becomes shorter than that of intervals obtained via sample splitting or data thinning.

\subsubsection{Randomized conditional selective inference}

We end by highlighting the clear advantages of randomized CSI in this setting. 

In randomized CSI, the  information that is unused in Step 1 of Box~\ref{box:tree} is used in Step 2; by contrast, this information is discarded by sample splitting and data thinning.  This has the following implications: (i) For a given level of selection quality, randomized CSI leads to narrower confidence intervals than sample splitting and data thinning. (ii) As the signal strength increases, randomized CSI uses less information for selection (since selection becomes ``obvious"), and therefore retains more for inference; consequently, its intervals narrow. By contrast, sample splitting and data thinning intervals do not adapt to the signal strength.

Furthermore, unlike full CSI, randomized CSI does not suffer from extremely long (much less infinite-length) confidence intervals. This is because randomized CSI places a lower bound on the information available for inference via a user-specified parameter ($\tau^2$ in the proposal of \cite{bakshi2024inference}).

%TC:ignore
\begin{figure}[h]
\includegraphics[width=\textwidth]{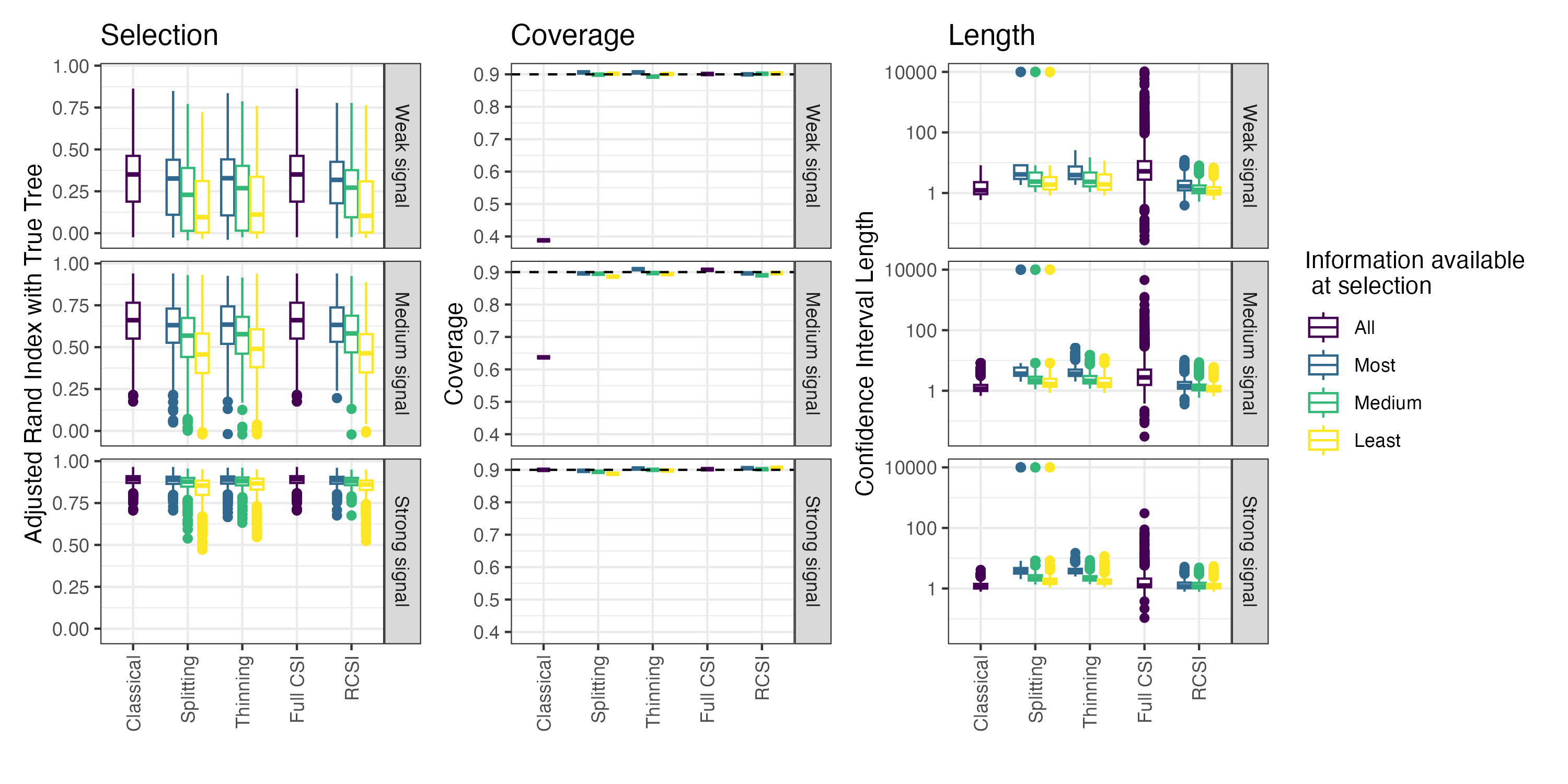}
\vspace{0.1cm}
\caption{The results of the simulation study of Section~\ref{sec_simulation}. For each of 1,000 simulated datasets,  we carry out the methods described in Section~\ref{subsec_compmethods} with $\alpha=0.1$ in Step 2 of Box~\ref{box:tree}, so that the nominal coverage  for all confidence intervals is $90\%$. The left panel shows the selection quality  for each method and each signal strength, as measured using the adjusted Rand index;  values close to $1$ indicate that the regions in the estimated tree nearly match those in the true data-generating model \eqref{eq_truetree}. The center panel shows the coverage of the confidence intervals, and the right panel shows the confidence interval lengths, broken down by method and signal strength. For sample splitting and data thinning, the ``information available at selection" is parameterized by $\epsilon$ (larger values of $\epsilon$ correspond to more information available at selection), whereas for randomized CSI it is parameterized by $\tau^2$ (larger values correspond to less information available at selection). }
\label{fig_simresults}
\end{figure}
%TC:endignore

%% file: sections/sec6_real.tex
\section{Application to Single-Cell RNA Sequencing Data}
\label{sec_realdata}

In this section, we consider Example~\ref{ex_cluster} in the context of single-cell RNA sequencing data. 

\subsection{Problem setup and inferential goals}

Single-cell RNA sequencing is a technology that enables scientists to measure the expression of tens of thousands of genes in tens of thousands of individual cells simultaneously. 
The data can be represented as $X \in \mathbb{Z}_{\geq 0}^{n \times p}$, where $n$ is the number of cells, $p$ is the number of genes, and  $X_{ij}$ is a count corresponding to the number of molecules observed in the $i$th cell that map to the $j$th gene. We assume that 
$
\EE[X_{ij}] = n_i \mu_{ij}$, where $n_i$ represents the sequencing depth of the $i$th cell (usually considered to be a technical artifact),  and $\mu \in \mathbb{R}^{n \times p}$ represents the biological signal of interest \citep{townes2019feature, sarkar2021separating}. 

Our goal is to carry out the procedure in Box~\ref{box:cluster}. 
By contrast to the rest of this paper, our goal is to test a hypothesis, rather than to construct a confidence interval (though of course, there is a direct relationship between the two; see Remark~\ref{remark:hypothesis}).

%we are now considering hypothesis testing instead of confidence intervals. In the hypothesis testing setting, 
 It should be clear to the reader that testing  the hypothesis in Step 2 using the same data used to select the hypothesis in Step 1, without accounting for this double use, will fail to control the selective Type 1 error rate (see Remark~\ref{remark:hypothesis}).
 %. 
%
%the classical approach lets $\Xtrain = \Xtest = X$ and uses a classical test such as a t-test or a Wilcoxon test in Step 2 of Box~\ref{box:cluster} that does not account for the data-driven nature of the parameter. While the invalidity of this classical approach has been 
%pointed out numerous times over the past several years and numerous alternatives have been proposed 
While it is well-understood among statisticians that this approach is problematic \citep{zhang2019valid, song2023clusterde, neufeld2024inference, hivert2024running, courrech2025review},  perhaps surprisingly, it continues to be common practice among biologists: see, for instance, the ``Guided Clustering Tutorial" vignette of the popular software package Seurat \citep{seuratvignette}. 

\begin{mybox}[label=box:cluster]{General procedure for inference after clustering}
%\begin{textbox}[h]\label{box:cluster}
%\section{Box 3: General procedure for inference after clustering}
\begin{enumerate}
       \item Cluster the $n$ cells to obtain two clusters,  $\hat{A}$ and $\hat{B}$, where $\hat{A} \cup \hat{B} =\{1,\ldots,n\}$ and $\hat{A} \cap \hat{B} =\emptyset$. The clusters represent estimated cell types.
\item For  $j=1,\ldots,p$, test the $j$th gene for differential expression: that is, test  $H_{0j} : \bar{\mu}_{\hat{A},j} = \bar{\mu}_{\hat{B},j}$, where $\bar{\mu}_{\hat{A},j} = \frac{1}{|i \in \hat{A}|} \sum_{i \in \hat{A}} \mu_{ij}$ and  $\bar{\mu}_{\hat{B},j}=\frac{1}{|i \in \hat{B}|} \sum_{i \in \hat{B}} \mu_{ij}$.
 \end{enumerate}
%\end{textbox}
\end{mybox}

\begin{remark}
\label{remark:cluster_hard}
Inference after clustering is a particularly challenging statistical problem because sample splitting cannot be applied \citep{gao_selective_2022}. Briefly, suppose we split the rows of $X$ into $\Xtrain$ and $\Xtest$, and cluster the rows of $\Xtrain$. This will not yield cluster assignments for the rows in $\Xtest$.  Furthermore, any attempt to ``transfer" the clusters from $\Xtrain$ to $\Xtest$ to obtain cluster assignments for the cells in $\Xtest$ will necessarily make use of the data in $\Xtest$, thus preventing inference on that data using 
a classical test. This is a subtle point that remains a source of confusion in the scientific literature (see, e.g., \citep{trapnell2024revealing}). 
\end{remark}

\subsection{Dataset and metrics for comparison}

Following \cite{townes2019feature}, we analyze a dataset of peripheral blood mononuclear cells (PBMCs) that was collected by \cite{zheng2017massively}. In addition to conducting single-cell RNA sequencing on the cells, \cite{zheng2017massively} also  sorted the  cells into homogeneous populations using fluorescence activated cell sorting. Thus,  ``ground truth" cell types are known in this dataset, enabling \cite{townes2019feature} to create a ``negative control" and a ``positive control" dataset. Following their lead, we created a negative control dataset containing  $1000$ CD14+ monocytes, and a positive control dataset containing   $1999$ cells identified as either CD14+ monocytes or B cells.

Both the negative and positive control datasets initially contain counts for $p=15,568$ genes, but we reduce the dimension of both datasets to $p=1000$ genes using the deviance feature selection method of \cite{townes2019feature}. While details are given in Appendix~\ref{appendix_real}, we briefly note that we retain the 1000 genes thought to exhibit the most biological variability to construct the positive control dataset, and a set of 1000 genes that are \emph{not} thought to exhibit the most biological variability to construct the negative control dataset.

Since the negative control dataset is composed entirely of CD14+ monocytes and does not include the most variable genes, we expect that the p-values in Step 2 of Box~\ref{box:cluster} will follow a uniform distribution. On the other hand, since  the positive control dataset is composed of two true cell types and includes the most variable genes, we hope to recover the true cell types in Step 1 of Box~\ref{box:cluster}, and to have power to reject null hypotheses in Step 2 of Box~\ref{box:cluster}.
Therefore, in Section~\ref{subsec:realresults}, for each method under consideration, we will report (i) the distribution of Step 2 p-values in the negative control dataset; (ii) the quality of cell type recovery in Step 1 in the positive control dataset; and (iii)  the proportion of rejected null hypotheses in Step 2 in the positive control data set. 

\subsection{Methods for comparison}

We outline our methods for comparison below. We limit ourselves to pre-existing proposals for inference after clustering that achieve conditional guarantees. Recall from Remark~\ref{remark:cluster_hard} that sample splitting, and thus data carving, are not applicable in this setting. We briefly note the existence of several proposals for inference after clustering that do not provide conditional guarantees \citep{kimes2017statistical, song2023clusterde, song2025synthetic}, and are thus omitted.

\subsubsection{Classical method}

To set a baseline for comparison, we begin by describing an incorrect approach, which uses all of the data in both steps of  Box~\ref{box:cluster}, and uses a classical hypothesis test in Step 2. 

While full details are given in Appendix~\ref{appendix_real}, two points are worth noting: (i) In Step 1 of Box~\ref{box:cluster}, we apply k-means clustering with $K=2$ after substantial data preprocessing, which is required to recover the true cell types in the positive control dataset \citep{townes2019feature}. (ii) In Step 2 of Box~\ref{box:cluster}, we test the null hypothesis using standard software for a Poisson GLM with estimates of the sequencing depth as offsets.

\subsubsection{Data thinning}
\label{subsub_thinning}

We consider two versions of data thinning, corresponding to a Poisson assumption and a negative binomial assumption. Each version splits the data $X$ into $\Xtrain$ and $\Xtest$, performs Step 1 of Box~\ref{box:cluster} on $\Xtrain$ in the exact same manner as for the classical method, and conducts Step 2 on $\Xtest$ using standard Poisson or negative binomial GLM software.

\cite{townes2019feature} argue that one can assume $X_{ij} \indsim \text{Poisson}(\lambda_{ij})$. Following \cite{neufeld2024inference}, we apply Poisson 
data thinning to each element of $X$  to obtain $\Xtrain$ and $\Xtest$ with equal allocation of Fisher information. 

Some authors instead argue that scRNA-seq data is overdispersed relative to the Poisson, rendering a negative binomial assumption more appropriate \citep{choudhary2022comparison}.  Following \cite{neufeld2023negative}, we estimate an overdispersion parameter for each gene using the raw data, and apply negative binomial data thinning with these estimated overdispersion parameters and an equal allocation of information between $\Xtrain$ and $\Xtest$. (Negative binomial data thinning with known overdispersion yields $\Xtrain$ and $\Xtest$ that are independent; with an estimated overdispersion, independence is not guaranteed. Accurately estimating the overdispersion can be challenging \citep{hivert2024running}.)

% For both types of thinning, we: (i) pre-process and cluster $\Xtrain$ in the same manner as for the classical method; and (ii) conduct Step 2 using a Poisson or a negative binomial GLM, as in the classical method.  

\subsubsection{Data fission}
\label{subsub_fission}

As mentioned in the previous subsection, negative binomial data thinning yields $\Xtrain$ and $\Xtest$ that are independent only if the 
overdispersion is known. Here, rather than ignore possible dependence between $\Xtrain$ and $\Xtest$ induced by the use of an estimate of overdispersion,  we  instead apply negative binomial data fission to each element of $X$ \citep{leiner2025data}. This does not require knowledge of the overdispersion. 

Interestingly, this operation is equivalent to applying Poisson data thinning to each element $X_{ij}$. 
Thus, we obtain the same $\Xtrain$ and $\Xtest$ as in Poisson data thinning in Section~\ref{subsub_thinning}, and  apply Step 1 of Box~\ref{box:cluster} to $\Xtrain$ as in that subsection. However, 
Step 2 is more challenging due to dependence between $\Xtrain$ and $\Xtest$.  We make use of a conditional likelihood ratio test proposed for this task in \cite{watt2026forthcoming}. 

\subsubsection{Full conditional selective inference}
\label{subsub_fullCSI}

\cite{chen2025testing} propose a full CSI approach for testing for a difference in means of a single feature after clustering. This method applies Step 1 of Box~\ref{box:cluster} to all of the data, and then conducts Step 2 conditional on the clusters selected in Step 1.

However, some challenges arise. The inference in Step 2 requires that the rows of the data follow a multivariate normal distribution with known covariance. Since raw single-cell RNA sequencing data is clearly not multivariate normal, following \cite{chen2025testing} we apply full CSI to a matrix $Z$ where $Z_{ij} = \log\left( \frac{X_{ij}}{\sum_{k=1}^p X_{ik}}+ 1\right)$, and rely on a plug-in estimator of the covariance matrix. The efficient algorithm of \cite{chen2025testing} for computing the p-value in Step 2 requires that the clusters in Step 1 are obtained by applying k-means clustering (or hierarchical clustering) to the full dataset, rather than to a preprocessed version. This means that we cannot use the same preprocessing before k-means clustering as in the other methods; details are in Appendix~\ref{appendix_real}.

\subsection{Results}
\label{subsec:realresults}

\subsubsection{Avoiding false discoveries on the negative control}

The left panel of Figure~\ref{fig:realdata} displays a uniform QQ plot of the 1000 p-values obtained using each method, on the ``negative control" data. As expected, due to the failure to account for selection, the classical method p-values are farthest from uniform. Poisson thinning and negative binomial thinning lead to p-values that are closer to uniform than the classical method, but are still anti-conservative, likely because the distributional assumptions are not met, and so the selection and inference sets are not independent. Negative binomial fission is thus able to improve slightly by accounting for the dependence between the selection and inference sets, but the p-values are still slightly anti-conservative. This may be because the assumptions underlying the method are still not quite met, or because this ``negative control" data is not entirely homogeneous. Despite the fact that the multivariate normality assumption certainly does not hold, the full CSI method seems to lead to p-values that are reasonably close to uniform on this data, mirroring the promising results of \cite{chen2025testing}. However, we note that slight modifications to the procedure for constructing the ``negative control" dataset were quite detrimental to the performance of full CSI.

\subsubsection{Selection quality on the positive control}\label{subsec:ari}

The right panel of Figure~\ref{fig:realdata} reports the adjusted Rand index between the clusters estimated in Step 1 of Box~\ref{box:cluster} and the ground-truth cell types, on the ``positive control" dataset. Since the signal is strong in this data, all methods recover the true cell types with a high degree of accuracy.

\subsubsection{Power to detect differentially-expressed genes on the positive control}

On the ``positive control" dataset, 
the classical method identifies $800$ genes out of $1000$ as highly differentially expressed ($p<0.01$, noting that these p-values do not control the selective Type 1 error and likely overstate the extent of differential expression). Recalling from Remark~\ref{remark:signal} that the effects of selection tend to be minimal when the signal in a dataset is strong, we now ask \emph{how many differentially-expressed genes are identified by each of the other methods with $\alpha=0.01$, and how many of these overlap with the $800$ genes detected by the classical method?} The right panel of Figure~\ref{fig:realdata} shows that all methods considered identify between $644$ and $714$ differentially expressed genes, the majority of which overlap with the classical method.

%TC:ignore
\begin{figure}[h]
\begin{minipage}{0.43\textwidth}
\includegraphics[width=\linewidth]{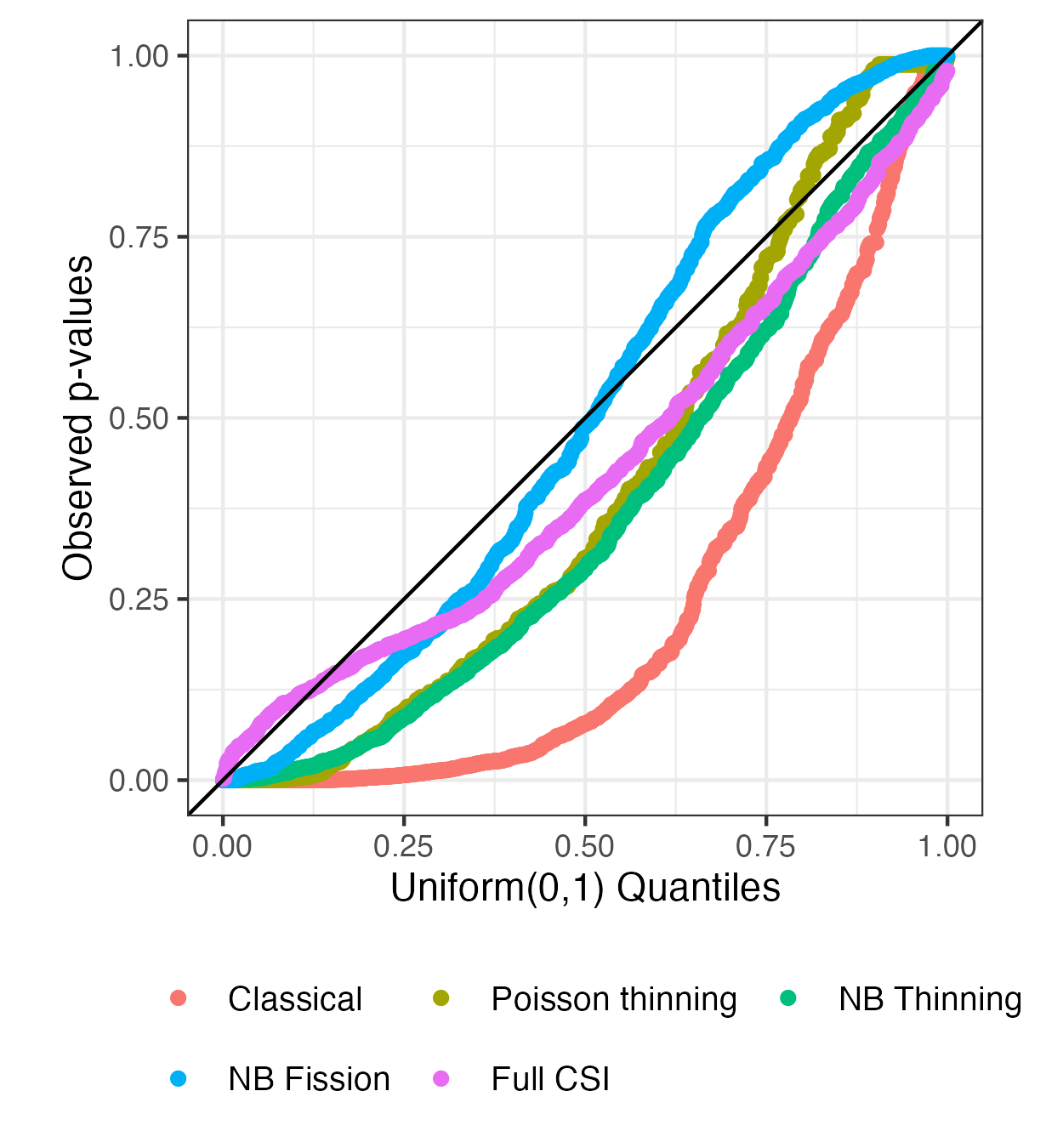}
\end{minipage}
\hfill
\begin{minipage}{0.49\textwidth}
\begin{tabular}{c||c|c|c|}
\hline
\small
Method & Adj.  & $| j : p_j < 0.01|$ & $\big| \{j : p_j < 0.01|\} \cap$  \\
& Rand & & $\{ j : p^{\mathrm{clas.}}_j < 0.01\} \big|$ \\
& Index &  & \\
\hline
\hline
Classical & 0.99 & 800 & 800   \\
\hline
Poisson & 0.99  &  714 & 703 \\
thinning &&& \\ 
\hline
NB  & 0.98 & 693 & 682\\
thinning  &&& \\
\hline
NB Fission  & 0.99 & 644  & 570  \\
\hline
Full CSI &  0.96 & 686 & 633 \\
\hline
\hline
\end{tabular}
\end{minipage}
\caption{The left panel shows uniform QQ plots of the 1000 p-values resulting from each of the five approaches applied to the negative control set. The right panel shows the results of each method applied to the positive control dataset. We report the adjusted Rand index between the true cell types and those estimated by each method. We then report the number of genes identified as highly significant ($p< 0.01$) by each method, as well as the number of these highly significant genes that were also identified by the classical method.
}
\label{fig:realdata}
\end{figure}
%TC:endignore

\subsection{Summary}

We briefly summarize the takeaways of this real data analysis. 
\begin{enumerate}
\item Many approaches to selective inference require distributional assumptions, which cannot be verified on real data. When two selective frameworks rely on \emph{different} distributional assumptions, a direct comparison is difficult. For example, in the right panel of Figure~\ref{fig:realdata}, does Poisson thinning identify more differentially expressed genes than full CSI due to higher power, or because the lack of independence between $\Xtrain$ and $\Xtest$ induced by a violation of the Poisson assumption has resulted in false discoveries? While the left panel of Figure~\ref{fig:realdata} suggests the latter, this cannot be verified. 
\item Data thinning and data fission identify cell types using a \emph{randomized} version of the full dataset $X$. This means that two scientists analyzing the same scRNA-seq dataset with the same pipeline could identify \emph{different} cell types, and thus end up doing inference on \emph{different} parameters. Scientists who are uncomfortable with this additional source of randomness may be tempted to opt for full CSI. 
\item By contrast to data thinning and data fission, an apparent advantage of full CSI is that it does not inject \emph{additional} randomness at the selection stage of an analysis pipeline. However, as the method of \cite{chen2025testing} conditions on the entire solution path of k-means clustering, its performance is surprisingly sensitive to the random seed used to initialize this algorithm. 
\item While \cite{townes2019feature} show that true cell types can be recovered with a high degree of accuracy on this data when k-means is applied after substantial preprocessing, it is more common in the analysis of scRNA-seq data to use graph-based clustering algorithms. For example, the Seurat software package uses the Louvain algorithm \citep{seuratvignette}. In our analysis, we were constrained in our application of full CSI, as we were not able to apply the preprocessing of \cite{townes2019feature} prior to applying k-means. Similarly, we are not aware of tractable implementations of full CSI for clusters selected using a graph-based approach. Thus, a drawback of full CSI is that it places strict limits on the type of clustering that a scientist is able to use.
\item While the goal in creating the ``negative control" dataset 
was to create a dataset with no heterogeneity between the cells, we cannot be certain that this goal was accomplished. That is, it is possible that the ``negative control" dataset still contains biological signal. In fact, slightly different procedures for constructing this ``negative control" dataset led to vastly different QQ plots than the one shown in Figure~\ref{fig:realdata}.
\end{enumerate}

In summary, existing approaches for attaining conditional guarantees rely on strong assumptions and can limit the flexibility of a scientist to analyze their data (e.g. since full CSI for clustering is only tractable for certain clustering algorithms). Despite these drawbacks, the results in Figure~\ref{fig:realdata} are generally promising, with methods tending to yield similar results on both the negative and positive control datasets. 

%% file: sections/sec7_disc.tex
\section{Discussion}
\label{sec_discuss}

Data-driven parameters of interest arise frequently in modern science, leading to a  need for methods that achieve valid inference conditional on selection. 

As we have seen, there are many related approaches for achieving such conditional guarantees, and none of these approaches are uniformly preferable to the others. Any approach involves a tradeoff between the amount of information used for parameter selection and the amount left over for inference. While full CSI, randomized CSI, and data carving are able to make use of \emph{all} of the information in the data, this comes at a cost of increased rigidity and a need for bespoke procedures for inference.
Ultimately, analysts must weigh the competing goals of selection and inference and choose an approach that allocates information appropriately, subject to the constraints of the assumptions that they are willing to make. 

Looking ahead, an important direction is the development of selective inference methods that are practical for scientists to integrate into real data analysis pipelines. To this end, there is a need to develop assumption-lean, flexible selective inference approaches. Some recent proposals avoid strong distributional assumptions in settings with asymptotically Gaussian statistics~\citep{tibshirani_uniform_2018, huang_selective_2024, perry_post-selection_2026}. It is also critically important to improve communication with scientists on (i) the importance of accounting for data-driven selection, (ii) the existence of selective inference approaches, and (iii) how to apply these approaches in the analysis of real data.
For (iii), general-purpose software that can accommodate different analysis pipelines is crucial. In a recent preprint, \cite{miyata2026statistical} aim to provide such software for exactly the application considered in Section~\ref{sec_realdata}; this line of work promises to reduce some of the limitations encountered with full CSI in that section. Similar efforts in different application areas would be a great service to the scientific community. 

Code to reproduce all simulations and real data analyses in this paper is available at \href{https://github.com/anna-neufeld/selective_review}{https://github.com/anna-neufeld/selective\_review}. %github.com/anna-neufeld/selective\_review. 

%% file: sections/zAppendix.tex
\section{Simulation details for Table~\ref{tab:classical-coverages}}
\label{appendix_naive}

In this Appendix, we describe the simulation that leads to the results in Table~\ref{tab:classical-coverages}. We describe the simulation separately for the Examples~\ref{ex_winner}-\ref{ex_cluster}. 

\subsection{Winner's Curse}

We let $Y_i \overset{\mathrm{ind.}}{\sim} \N(\mu_i, \sigma^2)$ for $i = 1,\ldots,n$, where $n=100, \sigma^2=1$, and $\mu_2 = \mu_3 = \ldots = \mu_{100} = 0$. 

As explained in Example~\ref{ex_winner}, for each dataset we (i) select $\hat{k} = \argmax_k Y_k$ and (ii) construct a classical 90\% Z-interval for $\mu_{\hat{k}}$. We generate $2000$ datasets for each of three signal settings:
\begin{itemize}
\item \textbf{None:} $\mu_1=0$, meaning that there is no ``true winner" among the candidates. 
\item \textbf{Medium:} $\mu_1=3.5$, rendering candidate 1 a clear-enough winner that $\hat{k}=1$ with fairly high probability.
\item \textbf{Strong:} $\mu_1=7$, rendering candidate 1 an obvious true winner, such that $\hat{k}=1$ with probability close to one. 
\end{itemize}
For each signal strength, we report the unconditional coverage, i.e. the proportion of the 2,000 classical intervals that contain $\mu_{\hat{k}}$. 

\subsection{Regression Trees}

We let $X \in \mathbb{R}^{n \times p}$ and $Y_i \overset{ind.}{\sim} \N(\mu_i, \sigma^2)$ for $i = 1,\ldots,n$. We let $n=200$, $p=20$, and $\mu_i = 10 + \beta \cdot \bm{1}(x_{i2} > 1)$, and $\sigma^2=1$. For each dataset we fit a one-level CART regression tree \citep{breiman1984classification} to predict $Y_i$ using $X_i$. This regression tree has the effect of dividing our covariate space $\mathbb{R}^p$ into two half-spaces. We let ${\hat{R}}$ denote the half-space with the larger average value of $Y$.  As explained in Example~\ref{ex_tree}, we then construct a classical 90\% Z-interval for $\mu_{\hat{R}}$ based on $\{Y_i :  X_i \in \hat{R}\}$. 

We generate $2000$ datasets for each of three signal strength settings. 
\begin{itemize}
\item \textbf{None:} $\beta=0$, meaning that $\EE[Y]=10$ for all $X$. 
\item \textbf{Medium:} $\beta=1.5$, meaning that there is a true region of covariate space in which $\EE[Y]$ is larger, and that CART will often approximately learn this region. 
\item \textbf{Strong:} $\beta=4$, meaning that there is a true region of covariate space in which $\EE[Y]$ is larger, and that CART will detect this exact region with probability close to $1$.
\end{itemize}
For each signal strength, we report the unconditional coverage, i.e. the proportion of the 2000 confidence intervals that contain $\mu_{\hat{R}}$. 

\subsection{Clustering}

We let $X \in \mathbb{R}^{n \times p}$, with $X_{ij} \overset{\mathrm{ind.}}{\sim} \N(\mu_{ij}, \sigma^2)$, for $i=1,\ldots,200$, $p=1,\ldots,20$, $\sigma^2=1$, and $\mu_{ij} = 10 + \beta \times \bm{1}\left(i \in \{1,\ldots,80\} \cap j \in \{1,\ldots,5\}\right)$.

As in Example~\ref{ex_cluster}, for each dataset we 
(i) apply k-means with $k=2$ to $X$ to partition the observations into two disjoint groups $\hat{A}, \hat{B} \subset \{1,\ldots,n\}$, and then (ii) construct a classical Z-interval for the difference in expression of the first column of $X$ across these two groups;  $\frac{1}{|i \in \hat{A}|} \sum_{i \in \hat{A}} \mu_{i1} - \frac{1}{|i \in \hat{B}|} \sum_{i \in \hat{B}} \mu_{i1}$.

We generate 2,000 datasets for each of three signal settings. 
\begin{itemize}
\item \textbf{None:} $\beta=0$, meaning that the observations are identically distributed and that $\frac{1}{|i \in \hat{A}|} \sum_{i \in \hat{A}} \mu_{i1} = \frac{1}{|i \in \hat{B}|} \sum_{i \in \hat{B}} \mu_{i1}$ for any estimated $\hat{A}$ and $\hat{B}$. 
\item \textbf{Medium:} $\beta=2$, meaning that there are two true clusters and that $\frac{1}{|i \in \hat{A}|} \sum_{i \in \hat{A}} \mu_{i1} - \frac{1}{|i \in \hat{B}|} \sum_{i \in \hat{B}} \mu_{i1} \neq 0$. %for any estimated $\hat{A}$ and $\hat{B}$ that are not orthogonal to the true clusters. 
The signal is strong enough that the estimated clusters tend to approximate the true clusters fairly closely. 
\item \textbf{Strong:} $\beta=4$. The true clusters are well separated, meaning that the estimated clusters are equal to the true clusters with probability close to $1$.
\end{itemize}
In each signal setting, we report the unconditional coverage, i.e. the proportion of the 2000 confidence intervals that cover the true parameter.

\section{Analysis details for Section~\ref{sec_realdata}}
\label{appendix_real}

In this appendix, we provide further details on the data analysis in Section~\ref{sec_realdata}.

\subsection{Initial data preprocessing}
\label{appendix_preproc}

Starting with the raw negative and positive control datasets, we first filter to only include genes that were expressed in at least $20$ cells. We then apply the deviance feature selection method of \cite{townes2019feature} to the full negative and positive control datasets to rank the remaining genes in terms of how far they deviate from a ``null" model that has no biological variability. On the positive control dataset, we retain the 1000 genes that deviate most from this ``null" model; these are the genes that will help us estimate the true cell types and that are likely to be differentially expressed. On the negative control dataset, we retain the genes that are ranked between 1001 and 2000 in terms of their deviation from the null model. While only one cell type is included in the dataset, it is possible that some genes still capture biological variability associated with cell cycle phase or cell subtype; excluding the top 1000 genes helps eliminate this remaining biological variability and thus better ensure that the negative control dataset is a ``null" dataset. (We do not retain the 1000 genes that deviate the \emph{least} from the null model, as these contain mostly zero counts.) 

\subsection{Classical method}

Beginning with the 1000-dimensional negative and positive control datasets obtained after the initial preprocessing of Section~\ref{appendix_preproc}, we carry out the following steps.

We first apply the GLM-PCA method of \cite{townes2019feature} to each dataset to create a 30-dimensional embedding. We then apply k-means with $k=2$ to this 30-dimensional representation of the data to estimate two cell types. Finally, for every gene $X_j$, we fit a Poisson GLM with a log link to predict $X_{ij}$ using an intercept, an indicator for its estimated cell type, and an offset equal to the log of $\hat{n}_i = \sum_{k=1}^p X_{ik})$. We report the standard Wald p-value for the coefficient of the estimated cell type in this model. 

\subsection{Poisson data thinning}

Beginning with the 1000-dimensional negative and positive control datasets obtained after the initial preprocessing of Section~\ref{appendix_preproc}, we first apply Poisson data thinning to every element $X_{ij}$ to obtain $\Xtrain_{ij}$ and $\Xtest_{ij}$. We do this using the R package {countsplit} of \cite{neufeld2024inference}, and set the tuning parameter $\epsilon=0.5$ to allocate equal information to $\Xtrain$ and $\Xtest$. Because the preprocessing in Section~\ref{appendix_preproc} involves all of $X$ (and thus $\Xtest$), it introduces mild double dipping into the data analysis pipeline. However, this is required in order to compare all methods on a common set of $1000$ genes. 

Given $\Xtrain$ and $\Xtest$, we use exactly the same pipeline as the classical method. We apply GLM-PCA to obtain a 30-dimensional embedding of $\Xtrain$, apply k-means to this embedding, and then perform inference on $\Xtest$ using a Poisson GLM. Finally, we report the Wald p-values. 

\subsection{Negative binomial data thinning}

We start by applying negative binomial data thinning to every element $X_{ij}$ of the pre-processed datasets to obtain $\Xtrain_{ij}$ and $\Xtest_{ij}$, once again using the countsplit R package with tuning parameter $\epsilon=0.5$ to allocate equal information to $\Xtrain$ and $\Xtest$. This operation requires an estimate $b_{ij}$ of the overdispersion parameter for each element, under the assumption that $X_{ij} \indsim \mathrm{NB}(n_i \mu_{ij}, b_{ij})$, in the parameterization where $\EE[X_{ij}] = n_i \mu_{ij}$ and $\Var(X_{ij}) = \EE[X_{ij}]+\EE[X_{ij}]^2/b_{ij}$. We estimate overdispersion by applying the  \texttt{vst()} function in the {sctransform} R package \citep{hafemeister2019normalization}, which estimates one overdispersion parameter for each gene (i.e. $\hat{b}_{1j}=\hat{b}_{2j} = \ldots \hat{b}_{nj} = \hat{b}_j$). 

Given $\Xtrain$ and $\Xtest$, we proceed exactly as with Poisson data thinning, though we apply GLM-PCA with a negative binomial assumption and use Wald p-values from a negative binomial GLM rather than a Poisson GLM.

\subsection{Negative binomial data fission}

As mentioned, $\Xtrain$ and $\Xtest$ constructed from the negative binomial data fission procedure of \cite{leiner2025data} coincide with those arising from the Poisson data thinning procedure. We preprocess and cluster $\Xtrain$ in exactly the same way as for Poisson data thinning. 

Given estimated clusters $\hat{A}$ and $\hat{B}$, for each gene $j=1,\ldots,p$, we test the null
$$
H_0: X_{ij} \indsim \mathrm{NB}(n_i \mu_j, b_j) \text{ for } i=1,\ldots,n
$$
against the alternative:
$$
H_A: X_{ij} \indsim 
\begin{cases} 
\mathrm{NB}(n_i \mu_{A_j}, b_j) &\text{ if } i \in \hat{A}, \\
\mathrm{NB}(n_i \mu_{B_j}, b_j) &\text{ if } i \in \hat{B}. \\
\end{cases}
$$
We let $n_i = \sum_{k=1}^p X_{ij}$, but treat these estimated sequencing depths as fixed. We note that this null is slightly different than the one tested with GLMs of the previous methods: we are testing not just for a difference in means across $\hat{A}$ and $\hat{B}$, but rather for a difference in distribution. 

\cite{leiner2025data} provide the form of the conditional distribution of $\Xtest_{ij} \mid \Xtrain_{ij}=\xtrain_{ij}$ after negative binomial fission. Let $\ell(\cdot \mid \xtrain_{ij}, b, \mu, n_i, \epsilon)$ denote the log-likelihood of this conditional distribution. Following \cite{dharamshi2025decomposing}, we carry out a conditional likelihood ratio test. Our likelihood ratio statistic can be written as
$$
-2 \left[ \sup_{\mu, b} \sum_{i=1}^{n} \ell(\Xtest_{ij} \mid \xtrain_{ij}, b, \mu, n_i, \epsilon) - \sup _{\mu, b}   \sum_{i \in A} \ell(\Xtest_{ij}\mid \xtrain_{ij}, b, \mu, n_i, \epsilon) - \sup _{\mu, b} \sum_{i \in B} \ell(\Xtest_{ij} \mid \xtrain_{ij}, b, \mu, n_i, \epsilon) \right].
$$
We approximate this statistic using numerical optimization. Finally, we compare the value of this statistic to the quantiles of a $\chi^2_2$ distribution to obtain a p-value. The theoretical properties of this test, along with the details of the numerical optimization, are explored in forthcoming work \citep{watt2026forthcoming}. 

\subsection{Full conditional selective inference}

We first transform the negative and positive control datasets using standard log-normalization, in which $Z_{ij} = \log\left( \frac{X_{ij}}{\sum_{k=1}^p X_{ik}} + 1 \right)$, in an attempt to make the data approximately follow a normal distribution. We apply k-means with $k=2$ directly to the 1000-dimensional dataset $Z$: this is because the analytical characterization of k-means clustering in \cite{chen2025testing}, required for computational tractability, does not allow for additional preprocessing before clustering. We obtain p-values for each gene after k-means using the R package CADET from \cite{chen2025testing}. This method requires a plug-in estimate of the covariance matrix $\Sigma$ under the assumption that the rows of the matrix $Z$ are independent draws from $\N_p(\mu_i, \Sigma)$. For both the negative and positive control datasets, we assume that $\Sigma$ is a diagonal matrix, as all other methods for comparison assume independence of the genes. For the negative control dataset, we let $\hat{\Sigma}_{jj}$ be the sample variance of the column $Z_j$. For the positive control dataset, we let $\hat{\Sigma}_{jj}$ be the residual variance after accounting for the estimated difference in means across the two estimated clusters.